\renewcommand\footnotetextcopyrightpermission[1]{} 
\newtheorem{lemma}{Lemma}
\newtheorem{remark}{Remark}
\newcommand{\un}{U}
\newcommand{\load}{L}
\newcommand{\sync}{S}
\newcommand\newtop[2]{\genfrac{}{}{0pt}{}{#1}{#2}}
\newcommand{\specialcell}[2][c]{%
  \begin{tabular}[#1]{@{}#1@{}}#2\end{tabular}}
\begin{document}

\title{The Multiserver-Job Stochastic Recurrence Equation
for Cloud Computing Performance Evaluation}

\author{Francois Baccelli}
\authornote{All authors contributed equally to this research.}
\email{Francois.Baccelli@ens.fr}
\orcid{0000-0002-9326-8422}
\affiliation{%
\institution{INRIA and Télécom Paris}
  \city{Paris}
  \country{France}}
\author{Diletta Olliaro}
\authornote{Most of this work was carried out while this author was affiliated with Università Ca' Foscari Venezia.}
\authornotemark[1]
\email{diletta.olliaro@unive.it, olliaro@kth.se}
\orcid{0000-0002-7361-819X}
\affiliation{%
  \city{Venezia}
\institution{Università~Ca'~Foscari~Venezia}
  \country{Italy}}
  \affiliation{
  \institution{KTH~Royal~Institute~of~Technology}
  \city{Stockholm}
  \country{Sweden}
  }
\author{Marco Ajmone Marsan}
\authornotemark[1]
\email{marco.ajmone@imdea.org}
\orcid{0000-0002-9560-7053}
\affiliation{%
\institution{IMDEA Networks Institute}
  \city{Leganes}
  \country{Spain}}
\author{Andrea Marin}
\authornotemark[1]
\email{andrea.marin@unive.it}
\orcid{0000-0002-5958-1204}
\affiliation{%
\institution{Università Ca' Foscari Venezia}
  \city{Venezia}
  \country{Italy}}

\renewcommand{\shortauthors}{Francois Baccelli, Diletta Olliaro, Marco Ajmone Marsan, and Andrea Marin}

\begin{abstract}

\normalsize
Cloud computing data centers handle highly variable workloads: job resource requirements can range from just one or few cores to thousands, and job service times can range from milliseconds to hours or days. This variability significantly limits the maximum achievable utilization of infrastructures.
Queuing theory has addressed the study of these systems with the definition of the Multiserver-Job Queuing Model (MJQM), where $s$ identical servers are present and  job $n$ requires $\alpha_n$ of the $s$ servers for a random amount of time $\sigma_n$. The $\alpha_n$
servers are occupied and released simultaneously. Unfortunately, despite its simple formulation, the MJQM remains elusive.
For example, the MJQM stability condition has been derived only in particular cases. As a consequence, even applying Discrete-Event Simulation (DES) under high load becomes challenging, because stability cannot be determined a priori.
In this paper, we analyze the MJQM with general independent arrival processes and service times under FCFS scheduling, using stochastic recurrence equations (SREs) and ergodic theory. 
Starting from the definition of the MJQM SRE, we prove the monotonicity and separability properties that allow us to apply an extension of Loynes' theorem, known as the monotone-separable framework, and formally define the MJQM stability condition. From these results, we introduce and implement two algorithms: the first one is used to draw sub-perfect samples (SPS) of the system's workload and the second one estimates the system's stability condition given the statistics of the jobs' input stream. 
The nature of the SPS algorithm allows for a massive GPU parallelization, thus greatly improving the efficiency in the estimation of performance metrics. The algorithm for the estimation of the stability condition solves an important problem for the analysis of MJQMs. 
We also define new metrics that capture the synchronization loss in MJQM systems and we show how these metrics can be efficiently evaluated using the SRE approach. 
Finally, we show that the approach proposed in this paper can be extended to more complicated systems, including MJQMs where resources have types.
\end{abstract}

\keywords{Multiserver-job queue, Stochastic recurrence equation, Loynes' theorem, Perfect sample, Coupling from the past, Subadditive ergodic theorem, Monotone-separable network, Numerical analysis, Cloud computing, Data center.}

\maketitle
\fancyfoot{}
\thispagestyle{empty}

\section{Introduction}

Cloud computing infrastructures (often called data centers) have become a key pillar of digital environments. Their number, size, complexity, and performance continue to grow, at the cost of massive and steadily increasing hardware deployments and energy consumption. Recent estimates show that, in the US, about $7$ million servers are delivered every year for data center usage, and their life expectancy is $5$ years only~\cite{berkeley}. Data centers offer both powerful computing resources and massive storage capacity, supporting the development of extremely large AI models now used across many domains, as well as the execution of less demanding tasks submitted by users for a wide range of purposes.
Unfortunately, due to an intrinsic high workload variability resulting from requests for variable types and quantities of resources, combined with the difficulty of the ensuing scheduling problem,
the utilization of this huge amount of deployed hardware is rather low. 
In other words, in order to be able to serve a highly unpredictable workload, many more resources are deployed than those actually needed, as a high percentage of resource waste is inevitable under simpler schedulers. This is quite different from traditional systems in which tasks usually differ only in their resource holding times, not in their resource demands. In these traditional cases, when the workload increases, utilization approaches asymptotically  $100\%$, as it is known from queuing theory.
Maximizing the practical utilization of data center cloud computing resources will reduce the hardware needed to handle workloads, leading to both economic and environmental benefits. 

The design and optimization of data centers and their schedulers are not yet supported by adequate quantitative tools. 
Typical questions that arise when analyzing data center performance are: What is the waiting time distribution for jobs with given characteristics? What is the maximum speed at which jobs can arrive in order not to saturate the data center? What is the maximum utilization reachable when jobs have some specific characteristics?
Both analytical and simulation models struggle to cope with the complexity of these systems.
Discrete event simulation is difficult because highly variable workloads, along with the complexity of the data center dynamics, require very long simulation runs to produce reliable estimates of performance metrics.
Analytical modeling is equally challenging because traditional tools, particularly queuing theory, cannot handle the complexity of such architectures and algorithms yet.

A new queuing theory construct that has recently been proposed to approach the analysis of data centers is the Multiserver-Job Queuing Model (MJQM)~\cite{MJQM}. The MJQM comprises an infinite capacity waiting line and a finite number of identical servers. Jobs (customers) arrive requesting a variable number of servers.
When the desired number of servers is available, the job starts its execution, and after a variable amount of time, when execution ends, the job releases the occupied servers and leaves. Jobs correspond to computing processes that must be executed. Servers represent the data center/cloud computer resources, that, in practice, are not indistinguishable, so that an extension of the MJQM to the case of multiple resource classes is necessary.
In roughly five years of research (the first papers on MJQM analysis appeared around 2020), not many results have been obtained by exploiting and extending traditional stochastic analysis tools.
Even the question of the stability condition for a general MJQM is still open. In fact, most results concern MJQM with only two classes of jobs, where the first class comprises jobs requesting a small constant number of servers (hence termed small jobs), while jobs of the second class require a number of servers which is a constant fraction (possibly all) of those available in the data center (big jobs). 

\paragraph{Our contribution} In this paper, we propose a new approach to the analysis of MJQMs with First-Come First-Served (FCFS) scheduling. We use a stochastic recurrence–based method, that is much more general than previously used tools for this class of questions in terms of the statistical assumptions that can be handled.
Early examples of successful applications of this method are in the analysis of data bases~\cite{BCM}, parallel processing of task graphs~\cite{BL} and more recently single or multiserver queues~\cite{blanchet0,blanchet}.
Within this framework, the MJQM state is defined by the workload at each server, representing the time an arriving job must wait before that server becomes available. This represents a significantly different viewpoint from that of the state-of-the-art modeling of MJQM, which typically focuses on server occupancy.
This method will be referred to as the Multiserver-Job Stochastic Recurrence Equation (MJSRE) below. It allows for multiple job classes, and general (not necessarily independent) inter-arrival and service times. It can be generalized to the case of multiple types of resources, i.e., of servers of different nature (that could, for example, model CPUs, GPUs, storage units, etc.) and thus better reflects the reality of data center/cloud computing dynamics.

The key observation on our approach is that the stochastic recurrence equation underlying MJQM with FCFS scheduling satisfies the assumptions of the so-called
Monotone-Separable framework~\cite{BF}. As a first consequence of the monotonicity properties, a proper extension of Loynes' theorem~\cite{BB} can be used to obtain perfect samples of the steady-state variables. Specifically, we introduce a novel algorithm that  
produces tight sub-perfect samples (SPS), namely lower bounds $L_n$ to perfect samples of the steady-state variables, indexed by a parameter $n$ linked to simulation time, and such that the sequence $L_n$ increases and couples in \textit{almost surely} (a.s.) finite time to the perfect sample.
Perfect samples are random variables with a distribution equal to that of the steady state of the stochastic process underlying our queuing model. This approach is quite useful because we can obtain independent and accurate estimates of perfect samples without explicitly knowing the stationary characteristics of the random process. Moreover, with a sufficient number of samples and under the assumption of Poisson arrivals, important system metrics can be estimated, together with rigorous confidence intervals, such as the utilization of servers or the distribution of the waiting times per job type. Finally, accurate approximations of perfect samples can be used as initial states of Discrete-Event Simulation (DES) runs that can provide estimates of time-dependent performance metrics, avoiding the problem of discarding the initial warm-up period.

The monotonicity and separability properties hold under very general assumptions on the arrival process, and this leads to a first characterization of the model stability region, i.e., we determine the values of the intensity of the arrival process for which the system is stable, given the number of servers and the characteristics of the arriving jobs. We show that this stability region is always of the threshold type on the arrival process intensity (note that this is not always true for queuing systems, as observed, for example in~\cite{Bramson}). The threshold can be represented as the growth rate of a pile, whose existence is guaranteed by the sub-additive ergodic theorem~\cite{BF}. While this is explicitly computable in some specific cases, we also propose a numerical algorithm to determine this threshold in very general scenarios, i.e., with general inter-arrival and service times and arbitrary classes of jobs.

Notice that, although we focus on MJQM with indistinguishable servers, at the end of the paper we discuss extensions to systems with multiple types of resources, or systems where jobs may require a specific resource.

In more detail, the main contributions of this paper are the following:
\begin{itemize}
    \item We apply the stochastic recurrence approach to the analysis of the FCFS MJQM in very general settings (in terms of number of servers requested by jobs, job inter-arrival and service time distributions).
    \item We prove monotonicity and separability of the multiserver-job stochastic recurrence equation.
    \item We use the saturation rule~\cite{BF} to characterize the stability region of the FCFS MJQM in terms of the job arrival rate, showing that the stability condition is of the threshold type.
    \item We use the Coupling From The Past approach, and more precisely Loynes' theory~\cite{loynes}, to devise an algorithm that produces tight SPS of the workload at each server.
    \item We derive sufficient conditions for moments of the stationary load in MJQM to be finite.
    \item We provide a representation of the stability threshold in terms of the growth rate of a  pile.
    \item We give algorithms for the computation of the stability threshold, the SPS, and the average number of wasted servers, i.e., those unused because of Head-of-Line (HOL) blocking.
    \item We discuss how the SPS algorithm and simulations based on forward execution of the MJSRE can be parallelized in the SIMD (single instruction multiple data) sense, to exploit the computational power of GPUs.
    \item We present extensive numerical results for various FCFS MJQM settings, validating the stochastic recurrence equation approach against exact results available in simple cases, and against DES results in more realistic cases; we also compare execution times in a systematic way.
    \item We use SPS to approximate waiting time averages, variances, percentiles, and distributions in the case of Poisson arrivals.
    \item We show how the SRE approach can be extended to multiserver-job queues where servers are of different types, and where job requirements are defined in terms of the number of servers required for each type.
\end{itemize}

The paper is organized as follows. 
Section~\ref{sec:related} reviews related work on the performance analysis of multiserver-job queuing systems. Section~\ref{sec:background} provides background information and introduces key concepts that support the analysis presented in this work. In Section~\ref{sec:mjsre}, we formally define the queuing model under investigation and present the related SRE that serves as the basis of our results. Section~\ref{sec:stability} contains the stability analysis. In Section~\ref{sec:metrics}, we focus on the existence of moments for the waiting time distribution and show how to compute key performance metrics. Section~\ref{sec:implementation} discusses our implementation of the MJSRE approach as well as the use of the MJSRE for SIMD parallel simulation. 
In Section~\ref{sec:numerical}, we present the experimental validation of our approach, highlighting the practical feasibility and precision of the proposed method. In Section~\ref{sec:generalizations}, we extend the proposed framework to more general settings, including systems with multiple resource types. Finally, Section~\ref{sec:conclusion} concludes the paper and outlines directions for future work.

\section{Prior Work}~\label{sec:related}
The study of the MJQM has attracted growing attention due to its relevance to modern data center operations~\cite{MJQM}, where jobs often span multiple servers and exhibit diverse resource and timing requirements. While many practical schedulers, such as SLURM~\cite{slurm}, Borg~\cite{borg}, and YARN~\cite{yarn}, {rely on carefully crafted heuristics based on data center management experience}, they lack rigorous analytical guarantees, particularly regarding system stability and performance metrics. 

Looking for analytical results, the FCFS policy represents a natural candidate to obtain a baseline against which more complex schedulers can be compared. In addition, FCFS offers the advantages of simplicity and fairness guarantees among different classes. However, even in the FCFS setting, early analytical studies of MJQM focused on highly constrained scenarios, for example assuming exponentially distributed service times that are identical across job classes, or considering a very limited number of available servers in the system~\cite{brill-1984, karatza, rumyantsev-2017, morozov, afanaseva-2020}.

More recent work is striving to relax these assumptions. In particular, the FCFS MJQM has been analyzed in scenarios with two job classes with exponentially distributed service times, yielding explicit expressions for the stability region~\cite{grosof-2023-mama, ourPEVA}, for systems with an arbitrary number of servers. 

These results were further extended to more general 2-stage Coxian service time distributions (still for two classes), where the stability region is expressed in matrix form~\cite{anggraito-mascots-2024}. Similar matrix-analytic methods have also been applied to compute performance metrics under both exponential and Coxian assumptions~\cite{anggraito-mascots-2024, anggraito-cox-2025}. 
More recently,~\cite{grosof-2024-marcreset} derived explicit bounds on mean response times, allowing for an arbitrary number of classes. 

With respect to the specific methodology proposed in this work, a natural question is whether perfect samples can be derived in place of sub-perfect ones as defined above. For instance, the studies presented in~\cite{blanchet, blanchet0} introduced a coupling from the past technique for perfect sampling in GI/GI/$c$ queues (which is a special case of our model), based on a computable stationary majorant derived from a random assignment (RA) queuing system. As discussed in Appendix~\ref{app:sandwich}, this framework cannot be directly applied to the FCFS MJQM. The reason is that, in the more general context of the MJSRE, the corresponding RA system exhibits a more restrictive stability region than the initial model, so that the RA system upper-bound cannot be used in general.
{The identification of a stability region-preserving majorant for the MJQM is not straightforward, and will be considered in future works.}

Beyond the studies mentioned above, to the best of our knowledge, there are no general results for FCFS MJQM, especially for arbitrary job class counts and non-exponential service distributions. This motivates the use of alternative analytical tools, such as the stochastic recurrence approach, that can characterize the stability region for general FCFS MJQM settings.

Some performance studies of non-FCFS MJQM have also appeared in the recent literature~\cite{grosof-2023-serverfilling, maguluri-infocom-2012, bichler, beloglazov}, but are less related to this work.

\section{Background}~\label{sec:background}
This section describes the background for some of the key concepts used in this paper.

\subsection{Stochastic Recurrence Equations}
According to~\cite{BB}, stochastic recurrence equations are equations of the form $x_{n+1}=f(x_n,u_n)$ where $x_n$ belongs to some measurable space $({\mathcal X},{\mathcal F})$, the sequence $\{u_n\}$ is assumed to be stationary and ergodic, with $u_n$ belonging to some measurable space $({\mathcal U},{\mathcal G})$, and $f$ is a measurable function from ${\mathcal X} \times {\mathcal U}$ to $\mathcal X$.
The simplest queuing example is Lindley's equation for the workload in a G/G/1 queue: $W_{n+1} = (W_n + \sigma_n - \tau_n)^+$, where $W_n\in \mathbb R^+$ is
the workload at the arrival of job $n$, $\sigma_n \in \mathbb R^+$ is the service requirement of job $n$ and $\tau_n \in \mathbb R^+$ is the time between the arrival of job $n$ and that of job $n+1$.
This representation of a queuing problem in terms of such an equation paves the way to analyze this problem in terms of a dynamical system, and more precisely to analyze its stationary regime in terms of the theory of measure-preserving transformations.

\subsection{Loynes' Theorem}
Loynes~\cite{loynes} was the first to study Lindley's equation using the theory of dynamical systems. He proved that coupling from the past starting from an empty state at time $-n$ leads to a state at time 0 that increases with $n$. He proved that, if the sequence ${u_n}$ is ergodic, either the limit as $n\to \infty$ is infinite \textit{almost surely} (a.s.), which corresponds to the unstable case (load factor higher than 1) or the limit is a.s. finite, which corresponds to the stable case (load factor less than 1). In the latter case, the limit is a perfect sample of the steady-state workload distribution (see~\cite{BB}).
A perfect sample of a distribution $F$ is any random variable $X$ with distribution equal to $F$.
Assume that the state space of $X$ is equipped with a partial order $\le$. Then a sub-perfect sample is lower bound to
a perfect sample $X$, namely any other random variable $Y$ such that $Y\le X$ a.s.

\subsection{Monotone-Separable Stochastic Networks}

Consider a stochastic network described by the following framework:
\begin{itemize}
\item The network has a single input point process $N$, with points $\{T_n\}$; for all $m\leq n\in \mathbb N$; let $N_{[m,n]}$ be the $[m,n]$ restriction of $N$, namely the point process with points $\{T_l\}_{m\leq l\leq n}$.
\item The network has finite activity for all finite restrictions of $N$: for all $m\leq n\in \mathbb N$, let $X_{[m,n]} (N)$ be the time of the last activity in the network, when this one starts empty and is fed by $N_{[m,n]}$. We assume that for all finite $m$ and $n$ as above, $X_{[m,n]}$ is finite.
\end{itemize}
Assume that there exists a set of functions $\{ f_l \}$, $f_l:\mathbb R^l\times K^l\to \mathbb R$, such that:
$$ X_{[m,n]}(N) = f_{n-m+1} \{ (T_l, {\xi}_l),\ m \leq l \leq n \}, \label{eq0} $$
for all $n,m$ and $N$, where the sequence $\{\xi_n\}$ is that of service times and routing decisions.

We say that the network described above is monotone--separable if the functions $f_n$ are such that the following properties hold for all $N$:
\begin{description}
\item[1.] {\bf (causality):}
for all $m \leq n$, $X_{[m,n]}(N) \geq T_n$;
\item[2.] {\bf (external monotonicity):}
for all $m \leq n$, $X_{[m,n]}({{N'}}) \geq  X_{[m,n]}(N),$
whenever $N'= \{T'_n\}$ is such that $T'_n \geq T_n$ for all $n$, a property which we will write $N'\geq N$ for short;
\item[3.] {\bf (homogeneity):}
for all $c\in \mathbb R$ and for all $m \leq n$, $X_{[m,n]}(c+N) = X_{[m,n]}(N) + c;$
\item [4.] {\bf (separability):}
for all $m \leq l < n$, $X_{[m,l]}(N) \leq T_{l+1}$, then
$X_{[m,n]}(N) = X_{[l+1,n]}(N).$
\end{description}

\section{The Multiserver-Job Stochastic Recurrence Equation}\label{sec:mjsre}

In this section, we formally present the queuing model under consideration along with the associated notation. We then introduce the stochastic recurrence equation used for its analysis.

\subsection{Model Description}
Consider a queuing system consisting of $s$ identical servers and one infinite-capacity waiting room. Each arriving job $n$, $n\in \mathbb Z$, is characterized by a resource demand $\alpha_n$, i.e., the number of servers required, and a service time $\sigma_n$. A job $n$ arriving at epoch $t_n$ remains in the waiting room until it reaches the head of the waiting line and there are at least $\alpha_n$ available servers. We assume a First-Come First-Served (FCFS) discipline. 
All the servers required by a job are allocated and de-allocated simultaneously at the instant at which the job enters in service and when it departs, respectively. Notice that, as a consequence, at the departure time of a job, the scheduler puts in service all the jobs in the buffer, scanning them according to their arrival order, until it finds the first whose request cannot be satisfied, or none remains.
It may happen that, while job $n$ is at the head of line, a subsequent job $m$ with resource demand $\alpha_m<\alpha_n$ could start its service since there are at least $\alpha_m$ idle servers, but is blocked by the presence of job $n$. This situation is known as \emph{Head-of-Line blocking} (HOL) and is depicted in Figure~\ref{fig:hol}. 

In modeling data centers, the available CPUs (or cores) correspond to the $s$ servers in the proposed queuing system. HOL is an important factor that contributes to poor resource utilization in MJQM, particularly so in the presence of large differences in the number of servers required by jobs. This issue is particularly severe under FCFS scheduling. Designing a non-preemptive and job-agnostic scheduling discipline that balances high utilization, fairness, and low delays remains a challenging task.

\begin{remark}[Why is FCFS scheduling important in cloud computing data center performance analysis?]
Data centers are managed by complicated policies that may allow for job preemption, or require jobs to declare their service times as in the case of Backfilling~\cite{feitelson-2001}. Then, why is the analysis of FCFS policy attracting much attention (see, e.g.,~\cite{grosof-2023-mama, grosof-2024-marcreset, anggraito-mascots-2024, anggraito-cox-2025, grosof-2024-marcreset, ourPEVA, afanaseva-2020, brill-1984, ourTPDS})? Beside representing the fundamental case for many queuing systems, FCFS is the simplest non-preemptive policy that allows for a mathematical understanding of the under-utilization of data center resources. FCFS serves as  baseline to compare the performance of more complex scheduling policies~\cite{bichler, beloglazov, grosof-2023-serverfilling, grosof-2022-pomacs} in terms of response time, maximum throughput and fairness (recall that FCFS treats all jobs equally). Therefore, a mathematical understanding of the behavior of systems controlled by FCFS schedulers is the key to study the trade-offs introduced by other, and more sophisticated scheduling strategies.
\end{remark}

\begin{figure}[!htbp]
\centering
\includegraphics[width=0.35\textwidth]{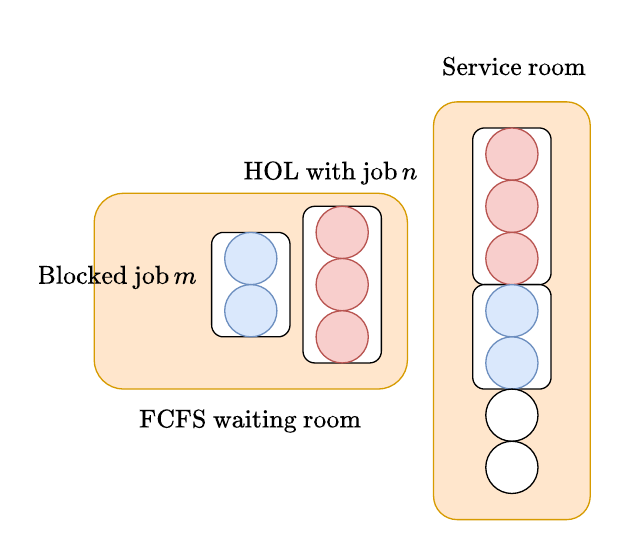}
\caption{Example of HOL blocking in Multiserver-job systems.}\label{fig:hol}
\Description{Example of HOL blocking in Multiserver-job systems.}
\end{figure}

\subsection{The Kiefer and Wolfowitz Stochastic Recurrence Equation}\label{sec:multiserver}
We first recall the stochastic recurrence equation for the multiserver queue where each job requires exactly one server~\cite{BB}. Jobs are characterized by their arrival time $t_n$ and service time $\sigma_n$.
We denote the inter-arrival time between job $n$ and job $(n+1)$ as $\tau_n$, i.e., $\tau_n=t_{n+1}-t_n$. 
The system has $s \geq 1$ servers that attend jobs,
and the scheduling policy is FCFS. 

According to the Kiefer and Wolfowitz model, the state of the system is an ordered vector of non-negative real components that represent the workload at each server. The FCFS discipline corresponds to the allocation rule that assigns an arriving job to the server with the smallest workload. Once assigned, this job will wait and will then be served at unit rate until completion.

The ordered workload vector process $\{ W_n \}, n \in \mathbb N$, where $W_n = (W_n^1,\ldots, W_n^s)\in {\mathbb{R}}_{\geq 0}^s$, is defined as a permutation in increasing order of the workload found in each server by the $n$-th job upon arrival: $W_n^1 \leq W_n^2 \leq \cdots \leq W_n^s,$ for all $n \in  \mathbb N $.

This ordered vector satisfies the Kiefer and Wolfowitz SRE:
\begin{equation}\label{eq.1}
W_{n+1} = {\mathcal R}( W_n + \sigma_n  \load - \tau_n  \un)^+\,, 
\end{equation}
where $ \load = E_1= (1,0,\ldots,0),\  \un = (1,\ldots,1)$, ${\mathcal R}$ is the operator arranging vectors of $\mathbb R_{\geq 0}^s$ in increasing order, and $(x^1,\ldots,x^s)^+ = ((x^1)^+,\ldots,(x^s)^+)$, with $x^+= \text{max}(x, 0)$. 

\subsection{The Multiserver-Job Stochastic Recurrence Equation}

\subsubsection{Definition of the stochastic recurrence equation and monotonicity}
For the analysis of the multiserver-job queuing model, we follow the lines presented for the Kiefer and Wolfowitz stochastic recurrence equation, but we introduce some important extensions. 

Similarly to Subsection~\ref{sec:multiserver}, the state just before time $t_n$ is the ordered workload vector $W_n\in \mathbb{R}_{\geq 0}^s$. 
In the FCFS case considered here, to start the service of job $n$, it is necessary to wait for the availability of the
latest of the first $\alpha_n$ CPUs, which occurs at $W_n^{\alpha_n}$.
Hence, the ordered workload vector satisfies the Multiserver-Job Stochastic Recurrence Equation
(MJSRE):
\begin{equation}\label{eq.2}
	W_{n+1} = {\mathcal R}( \sync(\alpha_n, W_n) + \sigma_n  \load(\alpha_n) - \tau_n  \un)^+\,,
\end{equation}
where $\load(\alpha)\in \mathbb{R}^s$ is defined by $\load(\alpha)^i= 1_{i\le \alpha}$ for all $1\le i\le s$, or equivalently $\load(\alpha)=E_1+\cdots+E_\alpha$ (with $\{E_i, i=1, \dots, s\}$ the orthogonal basis of $\mathbb{R}^s$.) and, for all ordered $W$, $\sync(\alpha, W)\in \mathbb{R}^s$ is defined by:
\[
\sync(\alpha, W)^i = W^{\alpha}\vee W^i,\quad \mbox{for all $1\le i\le s$}\,,
\]
with $a\vee b: =\max(a,b)$.
The addition of $\sigma_n  \load(\alpha_n)$ is because the load $\sigma_n$ is added to the workload of the $\alpha_n$ least loaded CPUs. That of $- \tau_n  \un$ is because the service provided up to $\tau_n$ is subtracted from the workload of each CPU.
The first term $\sync(\alpha_n, W_n)$ is necessary because the synchronous services of the required CPUs must be simultaneous and can only start when CPU $\alpha_n$ becomes free. 
Notice that the HOL effect is caught by this term, since the first $1, \ldots, \alpha_n-1$ servers remain idle until server $\alpha_n$ becomes available for the service of the job at the head-of-line.

We now introduce an important lemma that will be crucial for the development of our analysis. 

\begin{lemma}\label{lemma:monotonic}
	For all fixed $\alpha\in [1,\ldots,s]$, $\sigma$ and $\tau$ in $\mathbb R^+$, the map:
\begin{equation}\label{eq.21}
	W \to {\mathcal R}( \sync(\alpha, W) + \sigma  \load(\alpha) - \tau  \un)^+\,,
\end{equation}
is coordinatewise non decreasing. That is, for all ordered workload vectors $W$ and $W'$ of  $\;\mathbb{R}_{\geq 0}^s$ such that coordinatewise $W\leq W'$, it holds that, coordinatewise:
\[
{\mathcal R}(\sync(\alpha,W)+\sigma \load(\alpha)-\tau \un)^+ \leq \mathcal R(\sync(\alpha,W')+\sigma \load(\alpha)-\tau \un)^+\,.
\]
\end{lemma}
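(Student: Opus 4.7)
The plan is to factor the map in~(\ref{eq.21}) as a composition of five elementary transformations and verify that each one preserves the coordinatewise partial order on $\mathbb{R}^s$. Writing $\Phi(W) := \mathcal{R}(\sync(\alpha,W) + \sigma\load(\alpha) - \tau\un)^+$, the decomposition is
\[
W \;\xmapsto{\;\Phi_1\;}\; \sync(\alpha,W) \;\xmapsto{\;\Phi_2\;}\; \sync(\alpha,W) + \sigma\load(\alpha) \;\xmapsto{\;\Phi_3\;}\; \cdot - \tau\un \;\xmapsto{\;\Phi_4\;}\; (\cdot)^+ \;\xmapsto{\;\Phi_5\;}\; \mathcal{R}(\cdot).
\]
Each of $\Phi_2$, $\Phi_3$ is a deterministic translation, hence coordinatewise monotone. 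The map $\Phi_4$ is coordinatewise monotone because $t\mapsto t^+$ is non-decreasing on $\mathbb{R}$.

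For $\Phi_1$, given ordered vectors $W\le W'$, I would argue directly from the definition $\sync(\alpha,W)^i = W^{\alpha}\vee W^i$: since $W^\alpha\le {W'}^\alpha$ and $W^i\le {W'}^i$, one has $W^\alpha\vee W^i \le {W'}^\alpha\vee {W'}^i$, so $\sync(\alpha,W)\le \sync(\alpha,W')$ coordinate by coordinate.

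The only step that is not immediate is the monotonicity of the rearrangement operator $\Phi_5=\mathcal{R}$, and this is where I would put the main effort. I would rely on the classical min–max characterization of order statistics: for any $x\in\mathbb{R}^s$ and any $1\le i\le s$,
\[
\mathcal{R}(x)^i \;=\; \min_{\substack{T\subseteq\{1,\dots,s\} \\ |T|=s-i+1}}\;\max_{j\in T} x^j
\;=\; \max_{\substack{T\subseteq\{1,\dots,s\} \\ |T|=i}}\;\min_{j\in T} x^j.
\]
If $x\le y$ coordinatewise, then $\max_{j\in T} x^j\le \max_{j\in T} y^j$ for every $T$, and taking the minimum over all subsets $T$ of size $s-i+1$ yields $\mathcal{R}(x)^i\le \mathcal{R}(y)^i$. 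Hence $\mathcal{R}$ is coordinatewise monotone.

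Composing these five monotonicity statements gives $\Phi(W)\le \Phi(W')$ coordinatewise whenever the ordered vectors $W\le W'$ coordinatewise, which is exactly the claim of the lemma. The only conceptually delicate ingredient is the monotonicity of $\mathcal{R}$; all the other steps are essentially bookkeeping.
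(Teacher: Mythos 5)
Your proof is correct and follows essentially the same route as the paper's: both factor the map into $\sync$, an affine shift, the positive part, and the rearrangement $\mathcal{R}$, and check monotonicity of each piece. The only place the two arguments genuinely differ is the isotonicity of $\mathcal{R}$: the paper argues via a single-coordinate perturbation ($A \le A + aE_k$ implies $\mathcal{R}(A)\le\mathcal{R}(A+aE_k)$, to be iterated over coordinates), whereas you invoke the min--max representation of order statistics, which handles an arbitrary coordinatewise inequality $x\le y$ in one stroke and is arguably the cleaner justification. One small slip: your subset sizes are swapped --- the correct identities are $\mathcal{R}(x)^i=\min_{|T|=i}\max_{j\in T}x^j=\max_{|T|=s-i+1}\min_{j\in T}x^j$ (with your convention, e.g.\ $i=1$ would give $\max_j x^j$ for the smallest coordinate). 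This does not damage the argument, since the monotonicity of $\min_{|T|=k}\max_{j\in T}(\cdot)$ in the input holds for every fixed $k$, and every order statistic is obtained as $k$ ranges over $1,\dots,s$; but you should fix the indexing.
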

\emph{Proof.} 
By first principles, one first gets that the map $W \to \sync(\alpha, W)$ is monotonic in the sense defined above.
It immediately follows that the map $W \to (\sync(\alpha, W) + \sigma  \load(\alpha) - \tau  \un)^+$ is also monotonic.
Now, the proof will be completed by recalling that the map from a vector to its reordering is isotonic.  To see this, consider a vector $A$ and $A+aE_k$ for some index $k$ and $\alpha>0$. Clearly, $A\leq A+ \alpha E_k \implies \mathcal R(A) \leq \mathcal R(A+a E_k)$.

This concludes the proof. \qed

\begin{theorem}\label{th:ms}
The multiserver-job queuing system is monotone-separable.
\end{theorem}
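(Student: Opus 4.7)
The plan is to identify the last activity time $X_{[m,n]}(N)$ with a simple functional of the MJSRE iterates \eqref{eq.2} and then verify the four monotone--separable axioms in turn. Let $\{W_l\}_{m\le l\le n}$ be the ordered workload vectors produced by \eqref{eq.2} starting from $W_m=0$; the workload just after job $n$ is placed is $\hat W_n:=\sync(\alpha_n,W_n)+\sigma_n\load(\alpha_n)$, so the last server empties at
\[
X_{[m,n]}(N) \;=\; T_n+\max_{1\le i\le s}\hat W_n^i.
\]
Causality is immediate from $\hat W_n\ge 0$. Homogeneity is equally direct, because only the inter-arrival differences $\tau_l=T_{l+1}-T_l$ enter \eqref{eq.2}, so the global shift $N\mapsto c+N$ leaves every $W_l$ unchanged and merely translates $T_n$ by $c$.

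The substantive axiom is external monotonicity. Working with $W_l$ directly is awkward because $N'\ge N$ perturbs the gaps $\tau_l$ in a sign-indefinite way. The plan is instead to introduce the \emph{absolute availability times} $A_l^i:=T_l+W_l^i$ (the instants at which server $i$ would empty absent future arrivals) and to prove by induction on $l$ from $m$ to $n$ that $A_l^i\le A_l^{\prime i}$ for every coordinate $i$. The base case $W_m=W_m'=0$ reduces to $T_m\le T_m'$. For the inductive step, placing job $l$ turns the availability profile into the unsorted vector with entries $A_l^{\alpha_l}+\sigma_l$ on the first $\alpha_l$ coordinates and $A_l^i$ on the remaining ones, which is coordinatewise dominated by its $N'$ counterpart; the next iterate is the reordered pointwise maximum of this vector with the scalar $T_{l+1}$. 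Since both the pointwise maximum with a larger scalar and the reordering operator $\mathcal R$ preserve coordinatewise inequalities---the latter by the isotony argument already used in the proof of Lemma~\ref{lemma:monotonic}---the induction closes, and the above display expresses $X_{[m,n]}(N)$ as a coordinatewise non-decreasing function of $A_n$, which yields $X_{[m,n]}(N)\le X_{[m,n]}(N')$.

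Separability is then a definition chase: the hypothesis $X_{[m,l]}(N)\le T_{l+1}$ asserts that in the system driven only by $N_{[m,l]}$ every server has emptied by $T_{l+1}$, i.e., one more application of \eqref{eq.2} at $T_{l+1}$ produces the zero vector, so in the joint run on $N_{[m,n]}$ job $l+1$ sees exactly the empty initial state of the run on $N_{[l+1,n]}$, and the subsequent iterates coincide. I expect external monotonicity to be the main obstacle: Lemma~\ref{lemma:monotonic} governs only the dependence of a single iterate on $W$ at fixed $(\sigma,\tau)$, whereas the shift $N\to N'$ moves both endpoints of each gap, and the change of variables to $A_l^i$ is the substantive move that routes the iteration back under Lemma~\ref{lemma:monotonic}'s scope by rendering both of its time-dependent inputs ($A_l$ and $T_{l+1}$) monotone under $N\to N'$.
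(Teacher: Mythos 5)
Your proposal is correct and follows essentially the same route as the paper: the identification $X_{[m,n]}=T_n+\max(W_n^{\alpha_n}+\sigma_n,\,W_n^s)$, the immediate verification of causality, homogeneity, and separability, and above all the change of variables to absolute availability times $A_l^i=T_l+W_l^i$ are precisely the paper's argument (the paper calls this vector $D_{[m,n]}:=T_n\un+W_{[m,m+n]}$ and runs the same induction, using $(a+b)^+\le a^+{+}\,b$ where you use the equivalent identity $T_{l+1}+(x-\tau_l)^+=\max(T_l+x,\,T_{l+1})$).
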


\begin{proof}
Here, we have: $X_{[m,n]}=T_n+\max(W_{[m,n]}^{\alpha_n}+\sigma_n,W_{[m,n]}^s)$,
where $W_{[m,n]}$ is obtained from the recurrence equation:
$$W_{[m,m+k+1]} = {\mathcal R}(S(\alpha_{m+k},W_{[m,m+k]}) + \sigma_{m+k}  L(\alpha_{m+k}) - \tau_{m+k}  \un)^+,\quad k\ge 0\,,  $$
with $W_{[m,m]}=0$.
The causality property follows from the fact that workload is non-negative.
The homogeneity property follows from the fact that $\{W_{[m,m+k]}\}_k$ is left unchanged by a
translation of the arrival process.
The separability property follows from the fact that,
if $T_{l+1}\ge X{[m,l]}$, then
$W_{[m,l+1]}=0$, so that $W_{[m,n]}=W_{[l+1,n]}$ for all $n\ge l+1$.

We now prove the external monotonicity property.
Let $D_{[m,n]}$ be the vector: $D_{[m,n]}:= T_n\un+W_{[m,m+n]}$. We will prove the property that for all $n$, $D_{[m,n]}(N)\le D_{[m,n]}(N')$.
This will imply the desired monotonicity property.
The proof is by induction.
The base case holds as $D_{[m,m]}(N)=T_m\un\le T'_m\un =D_{[m,m]}(N')$. Assume now, by the induction assumption, that $D_{[m,n]}(N) \le D_{[m,n]}(N')$, i.e.: $T_{m+n}\un+W_{[m,m+n]}(N)\le T'_{m+n}\un +W_{[m,m+n]}(N')$.
Then:
        $$W_{[m,m+n]}(N) \le W_{[m,m+n]}(N') +(T'_{m+n}-T_{m+n})\un\,. $$
Hence, by Lemma~\ref{lemma:monotonic}:
        \begin{eqnarray*}
                W_{[m,m+n+1]}(N) & \le & 
                {\mathcal R}(S(\alpha_{m+n},W_{[m,m+n](N')}) + \sigma_{m+n}  L(\alpha_{m+n})  +(T'_{m+n}-T_{m+n} - \tau_{m+n})\un)^+\\
                & = &
                {\mathcal R}(S(\alpha_{m+n},W_{[m,m+n](N')}) + \sigma_{m+n}  L(\alpha_{m+n})  +(T'_{m+n+1} -T_{m+n+1})\un -
                \tau'_{m+n})\un)^+\\
                & \le &
                {\mathcal R}(S(\alpha_{m+n},W_{[m,m+n](N')}) + \sigma_{m+n}  L(\alpha_{m+n}) -
                \tau'_{m+n})\un)^+ +(T'_{m+n+1} -T_{m+n+1})\un \\
                & = &
                W_{[m,m+n+1]}(N') + (T'_{m+n+1} -T_{m+n+1})\un\,,
        \end{eqnarray*}
        where we used the fact that if $b>0$, then $(a+b)^+\le a^+ +b$.
This proves that $D_{[m,m+n+1]}(N) \le D_{[m,m+n+1]}(N')$, which concludes the induction.
\end{proof}

\subsubsection{The associated Loynes' sequence}
We are now in the position to define the Loynes' sequence associated with the MJSRE.
For this, it is best to introduce the following (left) shift on bi-infinite sequences $u=\{u_n\}_{n\in \mathbb Z}$:
$$\theta \left(\ldots, \newtop{u_{-2}}{{-2}}, \newtop{u_{-1}}{{-1}},\newtop{{u_0}}{0},\newtop{{u_1}}{1},\newtop{{u_2}}{2},\ldots\right)
=\left(\ldots, \newtop{u_{-1}}{-2},\newtop{u_{0}}{{-1}},\newtop{u_1}{0},\newtop{u_2}{1},\newtop{u_3}{2},\ldots\right)\,,$$
where the lower line in the array gives the position of the entry in the sequence.
Note that $\theta$ acts on the stationary sequence $\xi=\{\xi_k\}_{k\in \mathbb Z}$ with $\xi_k=(\alpha_k,\sigma_k,\tau_k)$.
For instance, given the bi-infinite sequence $\{\alpha_n\}_{n\in \mathbb Z}$ with $\alpha_0$ in position 0,
$\alpha_0 (\theta (\alpha))= \alpha_1$ and more generally, for all $n\in \mathbb Z$,
$\alpha_0 (\theta^n (\alpha))= \alpha_n$.
In addition, for all $n\ge 0$, $W_n$ defined above is a function of $\xi$.

Now, the Loynes sequence $\{M_n\}_{n\in \mathbf N}$ associated with the MJSRE is also a function of $\xi$ defined by induction
by $M_0=0 \in \mathbb{R}^s$ and:
\begin{equation}\label{eq.21L}
	M_{n+1}\circ \theta = {\mathcal R}( \sync(\alpha_0, M_n) + \sigma_0  \load(\alpha_0) - \tau_0  \un)^+,\quad
	n\ge 0\,.
\end{equation}
For instance,
$M_{1}= {\mathcal R}( \sync(\alpha_{-1}, 0) + \sigma_{-1}  \load(\alpha_{-1}) - \tau_{-1}  \un)^+$
is the ordered workload at time 0 if the system starts empty at time $-\tau_{-1}$, when using as CPU set $\alpha_{-1}$, as load $\sigma_{-1}$ and as next inter-arrival $\tau_{-1}$ at that time. It is a function of $\xi_{-1}$ only, whereas $M_{2}= {\mathcal R}( \sync(\alpha_{-1}, M_1(\theta^{-1}(\xi)) + \sigma_{-1}  \load(\alpha_{-1}) - \tau_{-1}  \un)^+$ is easily seen to be the ordered workload at time 0 if the system starts empty at time $-\tau_{-1}-\tau_{-2}$, when using as CPU set $\alpha_{-2}$, as load $\sigma_{-2}$ and as next inter-arrival $\tau_{-2}$ at that time, and with the very same as above at time $-\tau_{-1}$, namely a function of $\xi_{-2}$ and $\xi_{-1}$ only. 
More generally, $M_n$ is the state of the system at time 0 when the system starts empty at time $-(\tau_{-n}+\cdots+\tau_{-1})$ when using the data set $(\xi_{-n},\cdots,\xi_{-1})$ at the arrival epochs $-(\tau_{-n}+\cdots+\tau_{-1}),\ldots,-\tau_{-1}$.

The following result is immediate by induction.

\begin{lemma}\label{lemma:convergence}
The sequence $M_n$ is monotonically non-decreasing in $n$.
\end{lemma}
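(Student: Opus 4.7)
The plan is to prove the inequality $M_{n+1} \ge M_n$ coordinatewise for all $n \ge 0$ by induction on $n$, using the coordinatewise monotonicity already established in Lemma~\ref{lemma:monotonic}. The one preliminary manipulation is to rewrite the defining recursion in a form where both $M_{n+1}$ and $M_n$ are evaluated at the same point in the sample space: applying the shift $\theta^{-1}$ to both sides of the definition, one obtains
\[
M_{n+1}(\xi) \;=\; {\mathcal R}\bigl( \sync(\alpha_{-1}, M_n(\theta^{-1}\xi)) + \sigma_{-1}\load(\alpha_{-1}) - \tau_{-1}\un \bigr)^+,
\]
so that, denoting by $F_{(\alpha,\sigma,\tau)}(W) := {\mathcal R}(\sync(\alpha,W) + \sigma\load(\alpha) - \tau\un)^+$, we have $M_{n+1} = F_{\xi_{-1}}(M_n\circ\theta^{-1})$. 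By Lemma~\ref{lemma:monotonic} the map $F_{\xi_{-1}}$ is coordinatewise non-decreasing in its argument.

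For the base case, $M_0 = 0$ and $M_1 = F_{\xi_{-1}}(0)$, which takes values in $\mathbb{R}^s_{\ge 0}$ because of the outer $(\cdot)^+$ operation, so $M_1 \ge 0 = M_0$ coordinatewise. For the induction step, assume $M_n \ge M_{n-1}$ coordinatewise on the whole probability space; composing with $\theta^{-1}$ preserves the inequality pointwise (it just relabels the sample point), so $M_n\circ\theta^{-1} \ge M_{n-1}\circ\theta^{-1}$. Applying the monotone map $F_{\xi_{-1}}$ coordinatewise gives
\[
M_{n+1} \;=\; F_{\xi_{-1}}(M_n\circ\theta^{-1}) \;\ge\; F_{\xi_{-1}}(M_{n-1}\circ\theta^{-1}) \;=\; M_n,
\]
which closes the induction.

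There is no real obstacle here: all the work was done in Lemma~\ref{lemma:monotonic}, and the only subtlety is the shift bookkeeping, which is purely notational. The intuitive content is exactly the one indicated after the definition of the Loynes sequence: $M_{n+1}$ corresponds to starting the system empty one arrival further in the past than $M_n$, while feeding it the same data between $-(\tau_{-n}+\cdots+\tau_{-1})$ and $0$; monotonicity of the one-step dynamics then ensures that the extra work injected before time $-(\tau_{-n}+\cdots+\tau_{-1})$ can only increase the workload seen at time $0$.
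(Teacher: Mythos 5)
Your proof is correct and follows exactly the route the paper intends: the paper dismisses this lemma as ``immediate by induction,'' and the induction it has in mind is precisely yours, namely the coordinatewise monotonicity of the one-step map from Lemma~\ref{lemma:monotonic} combined with the shift bookkeeping $M_{n+1} = F_{\xi_{-1}}(M_n\circ\theta^{-1})$. Your write-up just makes explicit the details the paper omits.
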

Hence, there exists an almost sure limit $M_\infty$ of $M_n$ as $n$ tends to $\infty$.
In addition, since $\theta$ preserves the law of $\{\xi_n\}_{n\in \mathbb Z}$ (we assume that this sequence is stationary), it is easy to see that for all $n\ge 0$, $M_n$ has the same law as $W_n$ when taking $W_0=0$.
As a consequence, we can state the following lemma, which contains two important properties that allow us to introduce the algorithm for the computation of a tight SPS of the workload for the multiserver-job queuing system.  
\begin{lemma}\label{lemma:perfect}
The following properties hold:
\begin{itemize}
\item $W_n$ converges weakly (from below) to $M_\infty$ when $n$ tends to $\infty$;
\item $M_\infty$ provides a perfect sample of the stationary distribution of the state vector of the MJSRE.
\end{itemize}
\end{lemma}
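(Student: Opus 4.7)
The plan is to derive both items from Lemma~\ref{lemma:convergence} (monotonicity of the Loynes sequence $M_n$) together with the shift invariance of the joint law of $\xi=\{\xi_k\}_{k\in \mathbb Z}$ under $\theta$. This is the classical Loynes argument, whose ingredients are now all in place thanks to the coordinatewise monotonicity established in Lemma~\ref{lemma:monotonic}.

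For the first bullet, I would observe that, for each fixed $n\ge 0$, $M_n$ and $W_n$ (with the convention $W_0=0$) are obtained by applying the same measurable $n$-fold composition of the one-step update, respectively, to the blocks $(\xi_{-n},\ldots,\xi_{-1})$ and $(\xi_0,\ldots,\xi_{n-1})$. Since $\theta$ preserves the law of $\xi$, these two blocks are identically distributed, whence $M_n\stackrel{d}{=}W_n$ for every $n$. By Lemma~\ref{lemma:convergence}, $M_n\uparrow M_\infty$ almost surely, so $M_n$ converges weakly to $M_\infty$; transporting through the equality in law yields weak convergence of $W_n$ to $M_\infty$. The qualifier ``from below'' reflects the monotone increase of the coupled sequence $M_n$: in the coordinatewise stochastic order, the laws of $W_n$ are non-decreasing and converge upward to the law of $M_\infty$.

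For the second bullet, I would rewrite the Loynes recursion~\eqref{eq.21L} as
\[
M_{n+1}\circ\theta \;=\; F(M_n,\xi_0), \qquad F(W,\xi_0):={\mathcal R}\bigl(\sync(\alpha_0,W)+\sigma_0\load(\alpha_0)-\tau_0\un\bigr)^+,
\]
and pass to the limit $n\to\infty$. The map $F(\cdot,\xi_0)$ is continuous from below on $[0,\infty]^s$ because it is built from continuous, monotone operations only: the componentwise maxima in $\sync$, the positive part, the linear terms, and the sorting map $\mathcal R$. Combining this with the almost-sure monotone convergence $M_n\uparrow M_\infty$ of Lemma~\ref{lemma:convergence} yields
\[
M_\infty\circ\theta \;=\; F(M_\infty,\xi_0) \quad\text{a.s.}
\]
Since $\theta$ preserves the law of $\xi$ and $M_\infty$ is a measurable function of $\xi$, the random vectors $M_\infty\circ\theta$ and $M_\infty$ share the same distribution. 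Consequently, the law of $M_\infty$ is a fixed point of the one-step dynamics of the MJSRE, which, by the definition recalled in Section~\ref{sec:background}, is exactly what it means for $M_\infty$ to be a perfect sample of the stationary distribution of the state vector.

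The argument is, at its core, a corollary of Lemmas~\ref{lemma:monotonic} and~\ref{lemma:convergence}, so I do not anticipate a deep conceptual obstacle. The one step that deserves explicit care is the interchange of the limit with $F(\cdot,\xi_0)$ in the derivation of the fixed-point identity: it is here that one genuinely uses the continuity under monotone limits of every component of the MJSRE update, including the reordering $\mathcal R$. A final caveat, to be read together with Section~\ref{sec:stability}, is that the second bullet is substantive only when $M_\infty$ is almost surely finite; in the unstable regime, some coordinates of $M_\infty$ may be infinite and the ``stationary distribution'' of the state vector degenerates.
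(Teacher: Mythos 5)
Your proposal is correct and follows essentially the same route as the paper, which states this lemma without a written proof, relying on the preceding observations that $\theta$ preserves the law of $\xi$ (hence $M_n\stackrel{d}{=}W_n$ for $W_0=0$) and on the classical Loynes fixed-point argument from~\cite{BB}. Your explicit verification of the fixed-point identity $M_\infty\circ\theta=F(M_\infty,\xi_0)$ via continuity of the update map under monotone limits, and your caveat about finiteness of $M_\infty$ in the stable regime, simply fill in the standard details the paper leaves implicit.
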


\subsection{The sub-perfect sampling algorithm}\label{subsec:perfalg}

Lemma~\ref{lemma:convergence} and Lemma~\ref{lemma:perfect} allow us to define an algorithm  to draw tight SPS of the steady state of the MJSRE.

The algorithm, which implements Equation~(\ref{eq.21L}), receives  integer $\ell$ in input. It computes the maximum workload of the sequence from $-\ell$ to $0$ and the maximum workload of the sequence from $-2\ell$ to $0$. If the two maxima are the same, we consider that the algorithm has converged; otherwise, we double $\ell$ and repeat the iteration. 
We will discuss the choice of the value of $\ell$ in Section~\ref{sec:implementation}.

\begin{algorithm}[tbp]\footnotesize
\caption{SPS (Sub-Perfect Sample)}
\label{alg:general}
\begin{algorithmic}[1]
\Procedure{PerfectSampleLB}{$\ell$}
\State $\ell^{prev} \gets 0$ \Comment{Previous $\ell$}
\State $M^{max}\gets (-\infty,\ldots,-\infty)$ \Comment{Maximum workload}
\State $\text{Convergence} \gets$ \textbf{false} \Comment{Stop condition}
\Do
	\For {$n \gets \ell^{prev}+1$ to $\ell$}
       \State $\tau_n \gets \Call{SampleInterArrival}{n}$ \Comment{Sample time $\tau_{-n}$}
        \State $(\alpha_{n}, \sigma_{n})\gets \Call{SampleJob}{n}$ \Comment{Sample job $-n$}
    \EndFor
       {\State $M \gets (0,\ldots,0)$
       \For {$j \gets \ell $ downto $0$}
       \State $M= \mathcal R(S_{\alpha_{j}}(M)+\sigma_{j}e_{\alpha_{j}}-\tau_j e)^+$ 
	\EndFor}
    \State $M^*=M$
    \If {$M^*>M^{max}$}
       \State $M^{max}=M^*$
       \State $\ell \gets \ell * 2$
	{ \State $\ell^{prev} \gets \ell $}
    \Else
       \State Convergence $\gets$ \textbf{true}
    \EndIf
\doWhile{\textbf{not} Convergence}
\State \Return $M^{max}$
\EndProcedure
\end{algorithmic}
\end{algorithm}

Here are some observations on Algorithm~\ref{alg:general}:
\begin{enumerate}
\item $M$ is the workload generated by the sequence that begins with job $-\ell$ and terminates at the arrival time (but excluding) 
job $0$, therefore accounting for the services from $\sigma_{-\ell}$ to $\sigma_{-1}$ at any step.
\item $M^{max}$ is the maximum workload observed at any step.
\item It follows from the last two lemmas and from the results presented in~\cite[Ch. 2]{BB} that, if the MJSRE is stable, then
\begin{enumerate}
\item Algorithm~\ref{alg:general} \textbf{terminates almost surely} (this follows from the fact that $M_n$ is equal to $M_{\infty}$ for all $n\ge N$ for some finite $N$);
\item $M^{max}$ is a {\bf{lower bound to a perfect sample}} (i.e., a sub-perfect sample -- SPS) of the system workload;
\item $M^{max}$ tends to the perfect sample $M_{\infty}$ as $\ell \rightarrow \infty$.
\end{enumerate}
\end{enumerate}

The following remark will be used in Section~\ref{sec:implementation}, where we discuss the implementation of the algorithm.

\begin{remark}[Subsequences]\label{rem:subseqeunces}
Let us extend our notation as follows. Let $n_1>n_2\ge 0$ and consider the sequence of arrivals from job $-n_1$ to job $-n_2$. Let the initial workload at the arrival $-n_1$ be $M^{\dagger}$.
Denote the workload at the arrival of job $-n_2$ under this assumptions by $M_{n_1,n_2}(M^{\dagger})$. Then, the following holds true:
$
M_{n_1,0}(0)=M_{n_2,0}( M_{n_1,{n_2}}(0))\,.
$
\end{remark}

\begin{remark}[Comparison with a forward approach to MJSRE and ordinary DES] 
Algorithm~\ref{alg:general} produces a sub-perfect sample of the steady-state variables for any finite choice of $\ell$. If $\ell=\infty$, by Loynes' theorem, we obtain a perfect sample. The key-point of Algorithm~\ref{alg:general} is that it introduces a stopping criterion for the computation (independent of the load or of other characteristics of the system) that guarantees to produce a SPS as a result. Clearly, one could also use the MJSRE in a forward manner, starting from the empty system and carrying out a MJSRE-based simulation. We call this approach forward MJSRE (F-MJSRE) and we will use B-MJSRE (backward MJSRE) to refer to Algorithm~\ref{alg:general} when the context requires it.  If the sequence is long enough, we expect to obtain a good approximation of a perfect sample. However, the forward approach does not give statistical guarantees about the fact that the number of processed jobs is sufficient to have an accurate approximation of the perfect sample and this is experimentally discussed in Section~\ref{sec:implementation}. A totally different approach to the simulation of MJQM is what, in this paper, is called DES. This consists in the execution of a simulation based on maintaining a future event list containing the departure times of all jobs in service and the next arrival time. Since the nature of the $n$-th event (e.g., arrival, departure) depends on the samples of the random variables (e.g., service times, inter-arrival times) needed to determine this event, DES can only be parallelized in a straightforward manner by distributing simulation runs across different CPUs, not
by using the SIMD/GPU type parallelism allowed by the SRE approach.
We will discuss in more detail in Section~\ref{sec:implementation} how F-MJSRE and B-MJSRE are massively parallelizable thanks to the use of GPUs, while this is not the case, at the state-of-the-art, for DES.

\end{remark}

\section{Stability Analysis}\label{sec:stability}
In this section, we assume that the service times $\sigma_n$ have a finite first moment.
The stochastic system under consideration belongs to the Monotone-Separable framework as shown in Theorem~\ref{th:ms}. This leads to the following theorem, which shows two important things: the stability is of the threshold type on the job arrival rate; the threshold in question admits a representation in terms of 
the growth rate of a   pile. 
\begin{theorem}\label{th:stability}
The stability condition (the condition under which the workload vector $W_n$ converges weakly to a finite limit as $n\to \infty$) is of the form: 
\begin{equation} 
\lambda <\lambda_c,\quad \text{where }\lambda_c= 1 / \gamma\,,
\end{equation}
where $\gamma$ is the constant:
\begin{equation}\label{eq.2a}
\gamma:= \lim_{n\to \infty} |H_n|_\infty / n \,,\quad a.s.
\end{equation}
Here, $\{H_n\}$ is the  pile vector given by $H_0=0$ and the recurrence equation:
\begin{equation}\label{eq.2h}
H_{n+1} = {\mathcal R}( \sync(\alpha_n, H_n) + \sigma_n  \load(\alpha_n) )\,,
\end{equation}
describes the evolution of workload with respect to $n$ when $\tau_k=0$ for all $k$, that is, when all jobs are immediately available at the beginning of time.
\end{theorem}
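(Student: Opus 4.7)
The plan is to invoke the saturation rule for monotone--separable stochastic networks of~\cite{BF}, which applies here thanks to Theorem~\ref{th:ms}. According to that theory, the system is stable if and only if $\lambda$ lies strictly below a threshold equal to the reciprocal of the a.s.\ growth rate of the network's ``last activity time'' under saturated input (all $\tau_n=0$), and the existence of that growth rate is obtained from Kingman's subadditive ergodic theorem.

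The first step is to identify the saturated dynamics. Setting $\tau_n\equiv 0$ in~(\ref{eq.2}) removes the only negative contribution $-\tau_n\un$; every other term is non-negative, so the positive part acts as the identity and the recurrence collapses exactly to the pile equation~(\ref{eq.2h}). In this saturated regime, $H_n$ is the ordered workload vector just before the hypothetical arrival of job $n$ when jobs $0,\ldots,n-1$ all arrive at time $0$ in an initially empty system, and $|H_n|_\infty=H_n^s$ is the maximal dater of that saturated MJQM, i.e., the time at which its last CPU becomes idle.

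The second and central step is to establish the a.s.\ convergence of $|H_n|_\infty/n$ via a subadditivity of the form
\[
|H_{n_1+n_2}|_\infty \;\le\; |H_{n_1}|_\infty + |H_{n_2}\circ \theta^{n_1}|_\infty,\qquad a.s.,
\]
where $H_{n_2}\circ \theta^{n_1}$ is the pile generated by the shifted data $\xi_{n_1},\ldots,\xi_{n_1+n_2-1}$ starting from the empty state. This inequality follows directly from the external monotonicity part of Theorem~\ref{th:ms}: instead of processing jobs $n_1,\ldots,n_1+n_2-1$ from the residual workload $H_{n_1}$, one may delay each of their (still coincident) arrival times to $|H_{n_1}|_\infty$, the instant at which every server is guaranteed to be idle; by monotonicity, this delayed schedule yields a last activity time no smaller than $|H_{n_1+n_2}|_\infty$, and that last activity time equals $|H_{n_1}|_\infty + |H_{n_2}\circ\theta^{n_1}|_\infty$ by homogeneity and separability. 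Combined with stationarity and ergodicity of $\{\xi_n\}$ and the bound $E[|H_n|_\infty]\le n\,E[\sigma_0]<\infty$, Kingman's subadditive ergodic theorem then yields the a.s.\ (and $L^1$) existence of the limit $\gamma$ in~(\ref{eq.2a}). The final step is to apply the saturation rule of~\cite{BF}, which, under the four properties of Theorem~\ref{th:ms} and the integrability of $|H_1|_\infty$, states precisely that the Loynes limit $M_\infty$ of Lemma~\ref{lemma:perfect} is finite a.s.\ when $\lambda<1/\gamma$ and infinite a.s.\ when $\lambda>1/\gamma$.

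The main obstacle is the rigorous verification of the subadditivity inequality above: this is the one place where the specific structure of the MJSRE (namely, the fact that $|H_{n_1}|_\infty$ really is an instant at which all CPUs are simultaneously idle in the saturated pile, which is not a priori true for a generic monotone--separable system started from a non-empty state) must be combined with the external monotonicity of Theorem~\ref{th:ms}. Once this subadditivity is established, the rest of the proof reduces to classical machinery: Kingman's theorem for the existence of $\gamma$ and the saturation rule of~\cite{BF} for the threshold characterization.
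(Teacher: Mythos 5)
Your proposal is correct and follows essentially the same route as the paper: the paper's proof is a one-line appeal to the saturation rule of~\cite{BF} for monotone--separable networks (applicable by Theorem~\ref{th:ms}), with the existence of $\gamma$ delegated to the subadditive ergodic theorem as proved in~\cite{BF}. You simply unpack the details of that citation — the identification of the saturated dynamics with the pile equation, the subadditivity of $|H_n|_\infty$ via external monotonicity, homogeneity and separability, the integrability bound $\mathbb{E}[|H_n|_\infty]\le n\,\mathbb{E}[\sigma_0]$, and Kingman's theorem — all of which is consistent with the argument the paper relies on.
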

\begin{proof}
The statement is a rephrasing of the saturation rule in~\cite{BF}.
As proved in~\cite{BF}, it follows from the sub-additive ergodic theorem that the   pile $H_n$ has a sup norm $|H_n|_\infty$ with a deterministic growth rate $\gamma$, which justifies the existence of the a.s. limit in Equation~(\ref{eq.2a}). 
\end{proof}
Here are a few observations. There are some situations of practical relevance where $\gamma$ can be evaluated explicitly. One of them is that where there are jobs which require all servers. Denote by $p_g$ the frequency of such jobs (which will be referred to as of type $g$ - for global).  In this case, it is immediate (from the cycle formula of the theory of regenerative processes) that:
\begin{equation}\label{eq.2n}
        \gamma = p_g \mathbb{E} [H^g_\eta]\,,
\end{equation}
where $H^g$ is the   pile defined above when the first job is global and $\eta$ is the smallest $n>1$ such that job $n$ is of type $g$. Note that the distribution of $\eta$ is geometric of parameter $p_g$.
It is worth noting that this result reduces to Equation (6) in~\cite{ourPEVA}, that, together with Equation (3) in the same paper, gives an explicit expression for the stability condition in the case of jobs requiring either all servers or just 1, with the latter having exponentially distributed service times. In the case of deterministic service times of small jobs, Equation (\ref{eq.2n}) is equivalent to Equations (1) and (3) in~\cite{ourTPDS}.

This condition can be extended to other and less constraining scenarios.  In the general case, the complexity of the evaluation of $\gamma$ is an open question. For this reason, it can be useful to evaluate the limit in Equation~(\ref{eq.2a}) numerically as shown in Algorithm~\ref{alg:limit}.

\begin{remark}[Applicability of Algorithm~\ref{alg:limit}]
It is worth highlighting that, thanks to the monotonicity and separability properties, Algorithm~\ref{alg:limit} is, to the best of our knowledge, the first general result for the numerical evaluation of the stability region of the FCFS MJQM. This result holds for an arbitrary number of servers and job classes, and accommodates general service time distributions. Moreover, it is important to notice that the stability region provided by the algorithm is insensitive to the distribution of inter-arrival times.
\end{remark}

\begin{algorithm}[tbp]\footnotesize
\caption{Stability Region}
\label{alg:limit}
\begin{algorithmic}[1]
\Procedure{Stability}{$\ell$, $\varepsilon$}
\State $\ell^{prev} \gets 0$ \Comment{Previous $\ell$}
\State $\gamma^{cur} \gets 0$ 
\State $\gamma^{prev} \gets -\infty$ 
\State $M \gets (0,\ldots,0)$
\While{$|\gamma^{cur}-\gamma^{prev}|>\varepsilon$}
    \State $\gamma^{prev} \gets \gamma^{cur}$
	\For {$n \gets \ell^{prev}+1$ to $\ell$}
        \State $(\alpha_{n}, \sigma_{n})\gets \Call{SampleJob}{n}$ \Comment{Sample job $n$}
    \EndFor       
       \For {$n \gets \ell^{prev}+1 $ to $\ell$}
       \State $M= \mathcal R(S_{\alpha_{n}}(M)+\sigma_{n}e_{\alpha_{n}})$ \Comment{Compute the   pile}
	\EndFor
    \State $\gamma^{cur} \gets M^s/\ell^{cur}$
    \Comment{Approximate the limit}
    \State $\ell^{prev}\gets \ell^{cur}$
\EndWhile
\State \Return $1/\gamma^{cur}$
\EndProcedure
\end{algorithmic}
\end{algorithm}
\section{Metrics}\label{sec:metrics}
For all the metrics defined in what follows, it is assumed that the system is stable in the sense
defined above. The focus is on jobs executed within a cloud computing data center. In what follows, we use $W$ to denote the limiting value of $M_\infty$ according to Lemma~\ref{lemma:convergence}.
\subsection{Moments}\label{sec:moments}
A natural question (which is important in the evaluation of confidence intervals)
is that of the finiteness of moments of order $k$ (e.g., of order 2) for the maximal
coordinate of $W$, namely $W^s$.
This question was studied in detail in~\cite{BF2}, under the general framework of
monotone-separable networks. It directly follows from that paper that,
in the Poisson case for instance, if $\sigma_0$ has finite moments of order $k+1$,
then $W^s$ has finite moments of order $k$.
\subsection{Job Observer Metrics}
Under the stability condition, the backward construction described in Section~\ref{sec:mjsre}
provides stationary samples of the ordered waiting time process $W\in \mathbb{R}^s$. This
stationarity is w.r.t. the discrete shift from one arrival to the next, and
the distribution that this shift preserves is the Palm probability w.r.t. the
arrival point process.

We discuss here metrics that are meaningful for jobs, or, equivalently, under the Palm
distribution of the (job) arrival point process.
This includes the stationary waiting time and the system time (or execution time) of a job.
The former, which is the duration between the arrival time of job 0 ($t_0=0$) and
the time all servers it asks for are available, is equal to $W^{\alpha_0}$,
where $\alpha_0$ is the number of servers that job 0 asks for.
The latter is equal to $r=W^{\alpha_0}+\sigma_0$.
Hence, the samples of $W$ allow one to provide samples of both the stationary waiting and the stationary system time.
These samples in turn allow one to get estimates and confidence intervals of e.g. moments of $v$ or $r$ by
using the strong law of large numbers and the central limit theorem, respectively.

\subsection{System Level Metrics}

The Palm inversion formula allows one to evaluate continuous time metrics of interest to
system level analysis. Here is a typical question: what is the mean number of processors active
in steady state?
This will, for instance, determine the energy consumption of the owner of the data center.

On coordinate $i$, the server is busy by load brought by jobs until (and including) job 0 until time 
$W^{\alpha_0} + \sigma_0$, for $1\le i\le \alpha_0$, and until time $W^i$, for $\alpha_0< i\le s$.

More precisely, in the absence of further arrivals, the servers are busy or idle as follows. For $1\le i\le \alpha_0$, server $i$ is:
\begin{eqnarray}
        \begin{cases}
                \mbox{busy} & \mbox{on } [0 , W^{i}]  \\
                \mbox{idle (due to HOL blocking)} & \mbox{on } [W^{i},W^{\alpha_0}]       \\
                \mbox{busy} & \mbox{on } [W^{\alpha_0},W^{\alpha_0}+\sigma_0]         \\
                \mbox{idle} & \mbox{on } [W^{\alpha_0}+\sigma_0,\infty]\,,
        \end{cases}
\end{eqnarray}
whereas for $\alpha_0< i\le s$, server $i$ is:
\begin{eqnarray}
        \begin{cases}
                \mbox{busy} & \mbox{on } [0 , W^{i}]\\
                \mbox{idle} & \mbox{on } [W^{i},\infty]\,.
        \end{cases}
\end{eqnarray}

Hence, in the time interval $[0,\tau]$, the total server idle time is:
\begin{equation}
\sum_{i=1}^{\alpha_0} (W^{\alpha_0}\wedge \tau-W^{i}\wedge \tau + \tau- (W^{\alpha_0}+\sigma_0)\wedge \tau)
+ \sum_{i=\alpha_0+1}^s (\tau-W^i\wedge \tau)\,,
\end{equation}
with $a\wedge b: =\min(a,b)$.

It follows from the inversion formula of Palm calculus~\cite{BB} that the mean number of servers inactive in the continuous time stationary regime of the system (or total steady-state waste) is: 
\begin{equation}
w=\lambda \mathbb{E}_0[ \sum_{i=1}^{\alpha_0} (W^{\alpha_0}\wedge \tau-W^{i}\wedge \tau + \tau- W^{\alpha_0}+\sigma_0)\wedge \tau)
+ \sum_{i=\alpha_0+1}^s (\tau-W^i\wedge \tau)].
\end{equation}
Note that the mean number of servers inactive due to HOL blocking is: 
\begin{equation}
w_{HOL}=\lambda \mathbb{E}_0 \sum_{i=1}^{\alpha_0} (W^{\alpha_0}\wedge \tau-W^{i}\wedge \tau )\,.
\end{equation}

Another quantity of interest is the mean number of jobs in the system in steady state.
For this, we use Little's law which implies that this last mean number is equal to the mean value of $r$, defined above under the Palm distribution of the arrivals, multiplied by $\lambda$.

\section{Implementation of the algorithms}
\label{sec:implementation}

The implementations of Algorithm~\ref{alg:general} and Algorithm~\ref{alg:limit} are available in a public repository\footnote{https://github.com/UniVe-NeDS-Lab/MJSRE.git}. This section outlines the key aspects of their implementation and their performance characteristics.

\subsection{Implementation of Algorithm~\ref{alg:general}}

Algorithm~\ref{alg:general} returns an independent SPS of the stationary distribution of the system's workload. As discussed in Section~\ref{sec:metrics}, this is directly connected to the waiting times when the arrival process follows a Poisson law. 
SPSs can serve two main purposes: they can either be used as a starting point for a forward DES run, thereby eliminating the initial warm-up period, or numerous independent samples can be collected to directly estimate the distribution or moments of the waiting times. 

When choosing the sequence length $\ell$, there is a trade-off between computational efficiency and the estimate accuracy. Smaller values of $\ell$ allow more samples to be processed in parallel and reduce the number of doubling of the parameter operations performed by the algorithm. However, selecting an $\ell$ value that is too small increases the risk of stopping the execution too soon, potentially leading to less accurate estimates. This phenomenon is common in iterative algorithms. Empirically speaking, a practical guideline would be to choose $\ell$ large enough that even the rarest events occur a reasonable number of times within the sample computation. In this way, the impact of events with very low probability is taken into account, ensuring that estimates are representative of the system behavior.

\subsection{Parallelization and Performance Advantages}
A key feature of Algorithm~\ref{alg:general} is that two parallel executions perform identical operations on different instances of random variables, with the only difference being the halting condition, which may cause one execution to terminate earlier than the other. Our implementation takes advantage of this observation to parallelize the code, using the SIMD principle and run it in GPUs. 

The implementation is built using Python with the CuPy library, which enables CUDA-accelerated tensor computations with a NumPy-compatible interface. The workload vectors for a \emph{batch} of parallel simulation samples are stored in a single CuPy tensor. The corresponding sequences of random variables (service times, inter-arrival times) are similarly stored in tensors, having the number of considered batches as the first dimension and $\ell$ as the second one. The state update logic within the simulation loop is fully vectorized to eliminate any iterative processing over the samples and so that all considered samples are updated in a single computational step. Operations central to the algorithm, such as finding the maximum and sorting are executed as single, highly optimized CUDA kernels operating concurrently across the batch dimension. Even the convergence test is performed in a single step across all samples. This design ensures that the runtime is dominated by the loop on the sequence length given by $\ell$, rather than the number of samples, making the collection of thousands of samples approximately as fast as one. To manage memory constraints, the implementation allows one to reduce the number of batches considered simultaneously and to split sequences using the results discussed in Remark~\ref{rem:subseqeunces} with input parameters. Additionally, the code also takes advantage of the on-the-fly generation of random variable subsequences, ensuring that the degree of parallelization is limited only by the available hardware.

\subsection{Comparison Between MJSRE and Traditional DES}
The parallel approach of the MJSRE offers significant advantages with respect to traditional DES. The MJSRE iterates provide truly independent SPSs, allowing for the construction of unbiased and consistent estimators of the true SPS values with rigorous control of confidence intervals. In contrast, time series obtained with DES are typically temporally correlated, and one may run a limited number of independent experiments. This reduces the effective sample size and increases the variance of estimators. Moreover, the algebraic structure of the MJSRE matches very well the SIMD architecture of GPUs. This allows for very high degrees of parallelization, limited only by the capabilities of the available hardware. 
In contrast, since DES relies on a list of events, the nature of the $n$-th event (e.g., arrival, departure, etc.) depends on the samples of the random variables already sampled
(e.g., service times, inter-arrival times) needed to determine this event. This implies that DES can only be parallelized in a straightforward manner by distributing simulation runs across different CPUs, but not by using the SIMD/GPU type parallelism allowed by the SRE approach.

\subsubsection{Comparison of Execution Times}\label{subsec:exetimes}
Execution times, for both traditional DES and MJSRE, are highly dependent on several factors such as the implementation, the used hardware architecture, and system parameters themselves. As far as MJSRE is concerned, the main limiting factor in terms of performance remains the available GPU memory, which restricts the number of parallel samples that can be processed simultaneously. As a reference, we consider a 75\% system load with 20 servers and 5 job classes with service times {distributed according to a bounded Pareto}. Generating a single {SPS}  takes about one second, while a DES processing 5 million jobs completes in roughly 10 seconds. {In this context,} achieving comparable confidence intervals for the mean waiting time typically requires 30 DES runs (roughly 300 seconds) or 500{,}000 samples (roughly 10 seconds, with adequate hardware). Benchmarks were run on a workstation with an Intel\textsuperscript{\textregistered} i9-12900K CPU and an NVIDIA RTX 4080 GPU.
\subsection{Comparison Between Backward and Forward MJSRE Executions}
While the natural exploitation of the MJSRE is as in Algorithm~\ref{alg:general}, proceeding backward in time to generate lower bounds of perfect samples of the workload steady-state distribution, it is also possible to use the MJSRE for the simulation of the MJQM, proceeding forward in time. Indeed, by setting the initial workload vector $W$ to e.g. zero at time $t=0$, it is possible to run SIMD parallel simulations in the same manner as subsequences of workload vectors are computed in Algorithm~\ref{alg:general}, and to implement this on GPU as well. We call this type of simulation \emph{Forward MJSRE execution} (F-MJSRE), to distinguish it from Algorithm~\ref{alg:general}, which represents a \emph{Backward MJSRE execution} (B-MJSRE).
\\
In order to assess the effectiveness of the two approaches (B-MJSRE and F-MJSRE), we compare the results that are obtained in the study of a MJQM with N=256 servers, and two classes of jobs, one requiring just one server (small jobs), while the other requires all $N$ servers (big jobs). The job arrival process is Poisson with rate $\lambda$ jobs per second, and the probability of an arriving job to be big is $p_b=10^{-3}$. Job service times are exponentially distributed with rates $\mu_b=0.025$ services per second for big jobs and $\mu_s=1$ for small jobs.
In this simple context, we can compute the stability condition of the MJQM as $\lambda < 20.45 = \lambda_{\rm max}$, and, using the matrix-analytic approach described in~\cite{anggraito-mascots-2024}, we can numerically compute  the exact value of the average waiting time, that we consider as ground truth.
\\
In Table~\ref{tab:back-forw}, we report the average waiting time values computed by the B-MJSRE and F-MJSRE for job arrival rates $\lambda= k \times 0.1 \lambda_{\rm max}$, with $k=1,2,\cdots ,9$, together with their 99\% confidence intervals and their execution times, as well as the exact average values.
F-MJSRE execution times are almost constant, and correspond to the simulation of job arrival sequences that comprise on average 50 big jobs. B-MJSRE execution times are determined by the stopping condition of Algorithm~\ref{alg:general}, initialized with $\ell=10000$, and greatly depend on the system load.
\\
Results show quite interesting behaviors. First of all, we see 
that, for the considered setting, the B-MJSRE execution confidence intervals always include the exact values, while this is not true for the F-MJSRE at high load. This is paid by longer (much longer at high load) execution times, but while the B-MJSRE execution includes a stopping rule, this is not the case for the F-MJSRE execution. If we run the F-MJSRE for longer arrival sequences, the quality of the results increases, but we have no hint about the needed sequence length unless we first run the B-MJSRE (but if we do it, there is no point in running the F-MJSRE). For example, if we quadruple the number of simulated job arrivals, the highest considered system load still yields confidence intervals that do not include the exact value. If we run the F- and B-MJSRE with equal numbers of arrivals, we obtain similar results in terms of accuracy. We can also observe that the estimated average waiting times in the case of B-MJSRE always underestimate the exact value (as expected, since the B -MJSRE generates lower bounds of perfect samples), while this is not the case for F-MJSRE, that tends to overestimate at low loads and underestimate at high loads (the latter effect being likely due to the fact that the simulation has not yet reached stability).
Finally, the relative width of confidence intervals decreases for increasing load in both B-MJSRE and F-MJSRE. However, for arrival rates equal to 80\% and 90\% of the stability limit, this conveys a misleading message in the case of F-MJSRE. 
\\
In conclusion, the B-MJSRE execution uses the monotone-separable framework to provide statistical guarantees on the collected samples, while the F-MJSRE execution must rely on heuristics that can be sometimes misleading, especially when the system is analyzed in heavy load. 
\begin{table}[t]
    \caption{Comparison between the average waiting times estimates produced by the backward (BWD) and forward (FWD) executions of the MJSRE (all times in seconds)}
    \label{tab:back-forw}
    \centering
    \begin{tabular}{|c|c|c|c|c|c|c|c|}
    \hline
     & \multicolumn{3}{|c|}{Average waiting time} & \multicolumn{2}{|c|}{99\% confidence interval} & \multicolumn{2}{|c|}{Execution time} \\
    \hline
    $\lambda / \lambda_{\rm max}$     & Exact & BWD & FWD & BWD & FWD & BWD & FWD \\
    \hline
    0.1 & 3.75 & 3.40 & 3.93 & [2.99, 3.81] & [3.46, 4.40] & 410 & 698 \\
    \hline
    0.2 & 8.59 & 7.99 & 8.85 & [7.32, 8.65] & [8.11,  9.59] & 409 & 698 \\
    \hline
    0.3 & 14.93 & 14.07 & 15.37 & [13.14, 15.00] & [14.37, 16.37] & 408 & 696 \\
    \hline
    0.4 & 23.49 & 22.42 & 23.97 & [21.18, 23.66] & [22.67, 25.27] & 452 & 691 \\
    \hline
    0.5 & 35.59 & 34.49 & 36.03 & [32.87, 36.11] & [34.36, 37.71] & 1299 & 697 \\
    \hline
    0.6 & 53.86 & 52.77 & 54.27 & [50.60, 54.94] & [52.05, 56.49] & 3170 & 696 \\
    \hline
    0.7 & 84.47 & 82.13 & 84.18 & [80.09, 86.17] & [81.13, 87.22]  & 5920 & 701 \\
    \hline
    0.8 & 145.89 & 144.57 & 135.08 & [139.91, 149.24] & [130.81, 139.36] & 16094 & 694 \\
    \hline
    0.9 & 330.59 & 328.17 & 213.88 & [318.68, 337.65] & [208.10, 219.67] & 80801 & 695 \\
    \hline
    \end{tabular}
\end{table}

\subsection{Implementation of Algorithm~\ref{alg:limit}}

The implementation of Algorithm~\ref{alg:limit} is simpler than that of Algorithm~\ref{alg:general}. The stability condition is estimated by checking the convergence of the limit in Equation~(\ref{eq.2a}). The procedure used for its numerical computation employs a doubling scheme, similar to the one of Algorithm~\ref{alg:general}, which in this case continues until the difference between consecutive estimates satisfies a convergence threshold, i.e., becomes $\leq \epsilon$. Long sequences may cause numerical instability due to unbounded growth of vector $M$. We mitigate this by partitioning the sequence and incrementally normalizing $M$, storing offsets for accurate reconstruction. A mild CPU parallelization scheme is applied by distributing portions of $M$ between threads and using a parallel merge algorithm~\cite{cormen} for reordering. 

To evaluate execution times, Algorithm~\ref{alg:limit} was run on the same hardware setup with a 1\% relative accuracy target (i.e., $\varepsilon = \frac{0.01}{\lambda_{\text{ideal}}}$, where $\lambda_{\text{ideal}}$ is the stability region boundary with no HOL blocking, defined as $\lambda_{\text{ideal}} = s/\left(\sum_{i=1}^{C} \sigma_i\alpha_ip_i\right)$, where $C$ is the number of job classes and $p_i$ is the probability of seeing an arrival of class $i$). Under this setting and the same parameters of the scenario described in Section~\ref{subsec:exetimes}, the computation of the maximum arrival rate completed in about 22 seconds.

\section{Numerical results}\label{sec:numerical}
In this section, we present numerical results obtained with the MJSRE approach proposed in this paper, considering three MJQM configurations. We start by looking at the case of two job classes, for which results are present in the previous literature, so that we can validate our approach against existing results. We then consider a larger MJQM, with five job classes and 20 servers, exploring the effect of different job service time distributions. Finally, we study a realistic MJQM configuration, with 2048 servers and 17 job classes, whose characteristics are derived from the measurements on Google Borg data centers reported in~\cite{borg} and summarized in~\cite{baiocchi}. We consider job classes grouping jobs according to the number of tasks they are composed of. We assume that all tasks of a job are executed in parallel and that the duration of a job is is given by the duration of its longest task. 

In addition to presenting and discussing numerical results, this section serves the purpose of demonstrating the capabilities of the MJSRE approach in the performance analysis of MJQMs, allowing the computation of metrics that were impossible to obtain with the techniques available in the literature.

\subsection{MJQM with two job classes}
In Fig.~\ref{fig:pevacheck}, we plot the average number of servers wasted due to HOL blocking in a MJQM with $s=200$ servers, two equally likely job classes, one requiring just one server (small jobs), the other requiring 100 servers (big jobs). The results generated by the MJSRE are plotted together with the exact results obtained in~\cite{grosof-2020-arxiv} and~\cite{anggraito-mascots-2024} versus the average service time of big jobs. Since the results in the literature only concern the case of exponential service times and system saturation, we used the same assumptions for the MJSRE. We can see that the results obtained with the existing approaches and MJSRE match very well.
In addition to validating our method, Fig.~\ref{fig:pevacheck} clearly shows the complexity of this type of systems, where oscillating behaviors caused by the interplay of different factors make it extremely challenging to find a closed-form  expression for the stability condition, even in the simplest scenario of two classes with independent exponential job service time distributions.
\begin{figure}[!htbp]
  \centering
  \begin{subfigure}[t]{0.3\textwidth}
    \includegraphics[width=\linewidth]{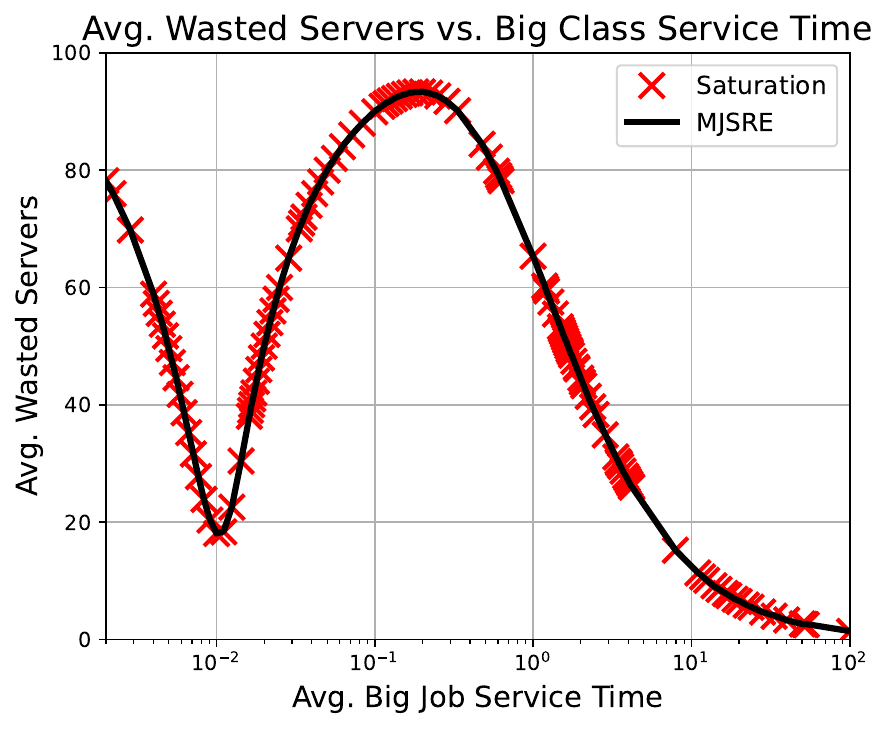}
    \caption{Average number of wasted servers vs. average service time of big jobs in a MJQM with two classes of jobs.} \label{fig:pevacheck}
  \end{subfigure}
  \hfill
  \begin{subfigure}[t]{0.3\textwidth}
     \includegraphics[width=\linewidth]{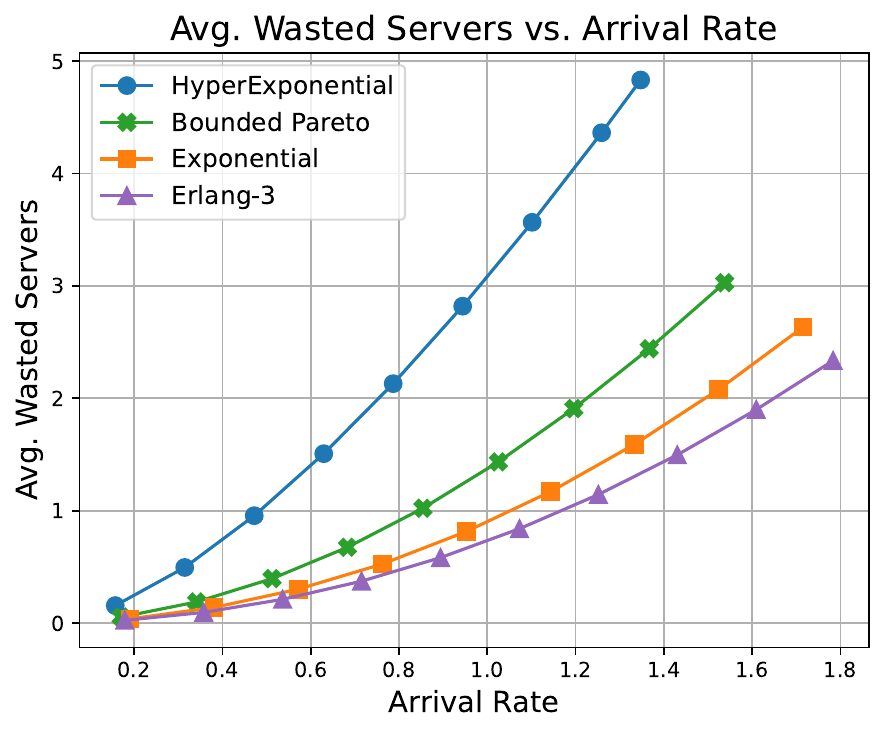}
    \caption{Average number of wasted servers vs. job arrival rate in a MJQM with five classes of jobs.}
    \label{fig:wasted_distro_comp}
  \end{subfigure}
  \hfill
  \begin{subfigure}[t]{0.3\textwidth}
    \includegraphics[width=\linewidth]{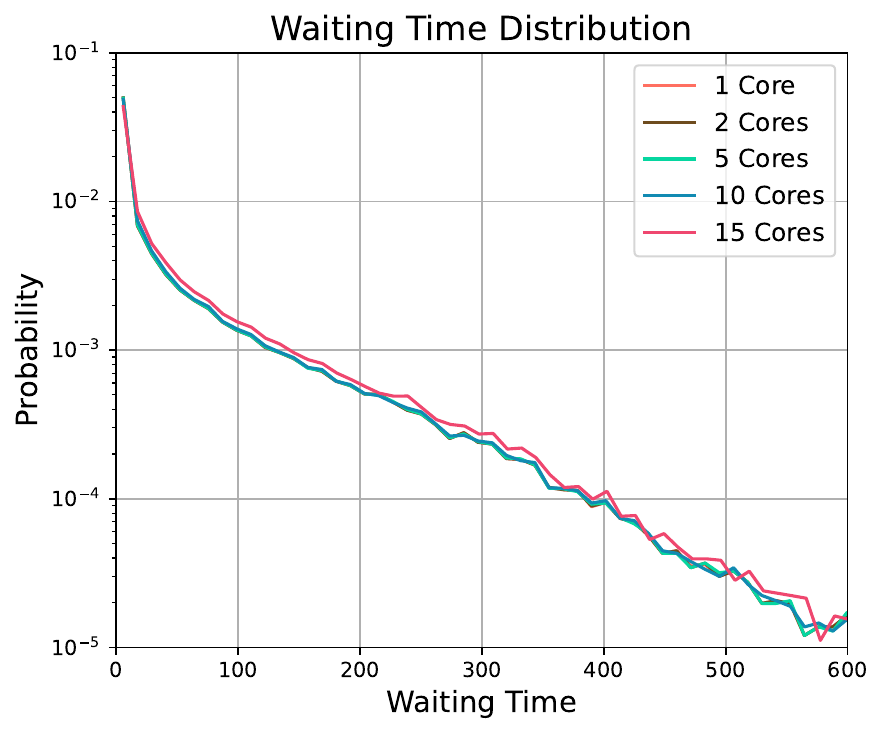}
    \caption{PDF of waiting times per class in a MJQM with five classes of jobs and bounded Pareto job service times.}
    \label{fig:classes_distros}
  \end{subfigure}
  \caption{Left and center: average number of servers wasted due to HOL, in a system with Poisson arrivals and exponential service times. Left: $s=200$, two job classes, comparison of literature  and MJSRE. Center: $s=20$, five job classes as in Table~\ref{tab:synth}. Right: Probability density function of waiting times per class in a MJQM with five classes of jobs and bounded Pareto job service times.
  \vspace{-0.2in}
  }
\label{fig:pevaandwasted}
  \Description{Exponential Service Times}
\end{figure}
\subsection{MJQM with five job classes}\label{subsec:synth}
Next, we present results for a MJQM configuration with 5 job classes. We consider a system equipped with $s=20$ identical servers. Job classes are defined by their number of core requirements, class probabilities, and mean job service times, as summarized in Table~\ref{tab:synth}. With these values, disregarding the effect of HOL, the system's stability condition would be 
$\lambda < \lambda_{\text{ideal}} = 2.11$.
\begin{table}[t]
\caption{Job class characteristics with corresponding parameters for hyperexponential (2-phase) and bounded Pareto service time distributions. Parameters for different distributions are chosen so as to maintain the same mean service time.}
{\footnotesize
\vspace{-0.1in}
\label{tab:synth}
\centering
\begin{tabular}{|l|c|c|c|c|c|c|}
\hline\specialcell[l]{\textbf{Cores}} & 
\textbf{Prob.}  & 
\specialcell[c]{\textbf{Mean}\\ \textbf{Service}\\ \textbf{Time}} & 
\specialcell[c]{\textbf{Mean} \textbf{Phase}\\\textbf{Time (Hyperexp)}} & 
\specialcell[c]{\textbf{Bounds}\\ \textbf{(B. Pareto)}} & 
\specialcell[c]{\textbf{Variance}\\\textbf{(Hyperexp)}} & 
\specialcell[c]{\textbf{Variance}\\\textbf{(B. Pareto)}}\\
\hline
1  & 0.5  & 0.5   & (25.0, 0.25)     & (0.15, 400.0) & 12.38 & 5.62 \\
2  & 0.1  & 0.83  & (41.5, 0.42)     & (0.25, 400.0) & 34.10 &  11.38 \\
5  & 0.2  & 1.25  & (62.5, 0.63)     & (0.38, 400.0) & 77.35 & 20.078 \\
10 & 0.19 & 3.33  & (166.5, 1.68)    & (1.05, 400.0)  & 548.96 & 77.15 \\
15 & 0.01 & 10.0  & (500.0, 5.05)    & (3.35, 400.0)  & 4950.51 & 335.56\\
\hline
\end{tabular}}
\end{table}
\begin{figure}[!htbp]
  \centering
  \begin{subfigure}[t]{0.3\textwidth}
    \includegraphics[width=\linewidth]{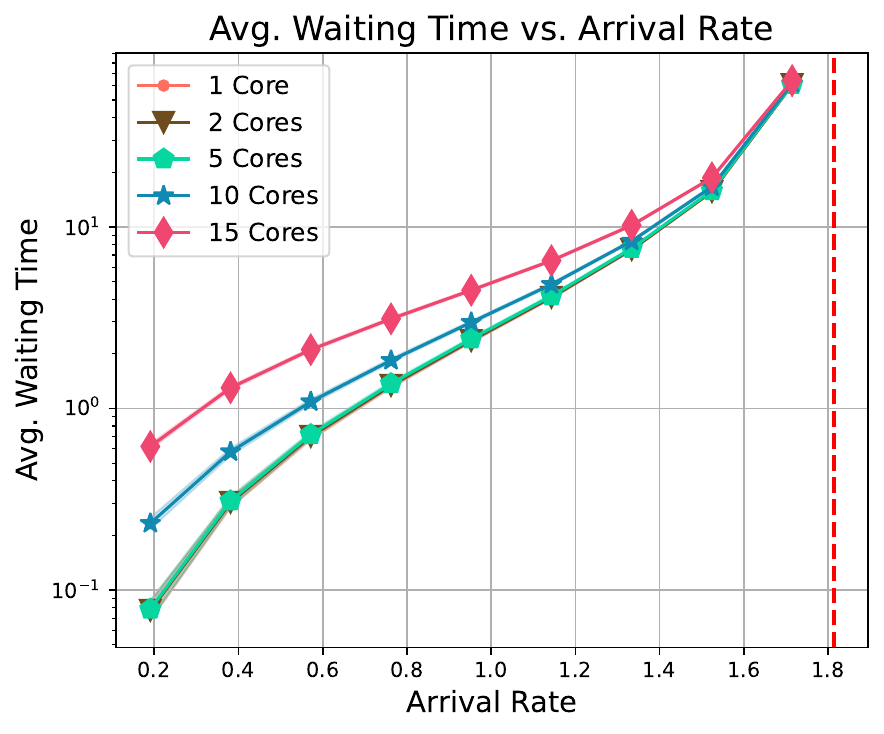}
    \caption{Average waiting time per class with 99\% confidence intervals.}
\label{fig:synth_avg_exp}
  \end{subfigure}
  \hfill
  \begin{subfigure}[t]{0.3\textwidth}
    \includegraphics[width=\linewidth]{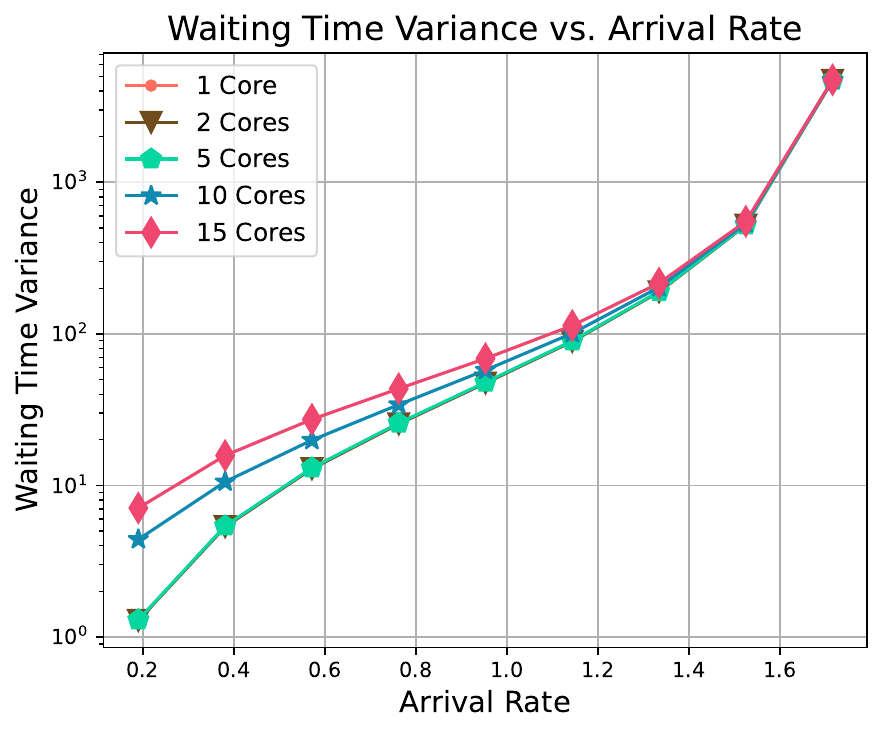}
    \caption{Variance of waiting times per class.}\label{fig:synth_var_exp}
  \end{subfigure}
  \hfill
  \begin{subfigure}[t]{0.3\textwidth}
    \includegraphics[width=\linewidth]{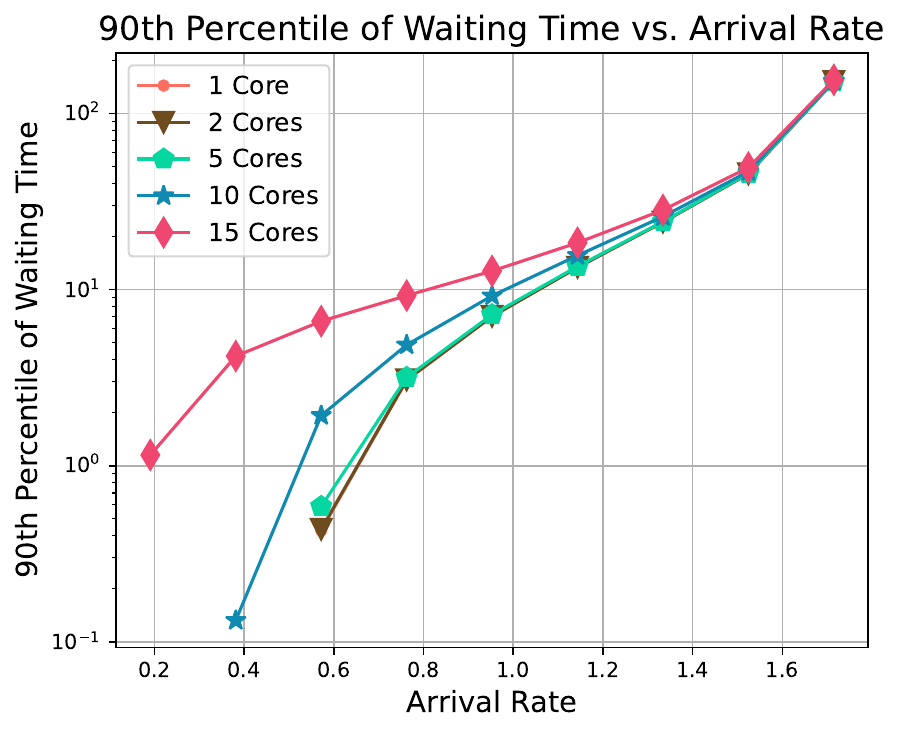}
    \caption{90th percentile of waiting times per class.}\label{fig:synth_90perc_exp}
  \end{subfigure}
  \vspace{-0.1in}
  \caption{Average, variance and 90-th percentile of per-class waiting times versus job arrival rate in the case of a MJQM with 20 servers and 5 job classes, with Poisson job arrivals and exponential job service times. The vertical bar in the left plot indicates the limit of the stability region.}\label{fig:synth_exp}
    \vspace{-0.1in}
  \Description{Exponential Service Times}
\end{figure}
We consider four different scenarios, always assuming Poisson job arrivals, but different job service time distributions:
\begin{enumerate}
    \item \textit{Exponential.} The service time of each job class follows an exponential distribution with rate equal to the inverse of the mean service time given in Table~\ref{tab:synth}. {The coefficient of variation of service times of jobs in each class is obviously 1, while the overall coefficient of variation of job service times is {$1.78$}. In this case the system is stable for {$\lambda<1.81$}. }
    \item \textit{Erlang (3-phases).} Job service times follow a 3-phase Erlang distribution with the same mean as the exponential. The coefficient of variation of jobs service times in each class is {$0.58$}, while the overall coefficient of variation of job service times is {$1.34$}. In this case, the system is stable for {$\lambda<1.84$}. 
    \item \textit{Hyperexponential (2-phases).} Job service times follow a 2-phase hyperexponential distribution. For each job class, phase rates are set so as to match the mean job service time of the exponential case. The resulting mean phase times are reported in Table~\ref{tab:synth}, where the longer phase is chosen with probability $0.01$.
    The coefficient of variation of service times of jobs in each class is {$7.036$}, while the overall coefficient of variation of job service times is {$10.22$}. In this case, the system is stable for {$\lambda<1.50$}. 
    \item \textit{Bounded Pareto.} Job service times follow a bounded Pareto distribution with lower and upper bounds $(x_{\text{min}}, x_{\text{max}})$ as described in Table~\ref{tab:synth} and shape parameter set to $1.4$ for all classes. The coefficients of variation of jobs service times in each class are $4.74, 4.06, 3.58, 2.64, 1.83$ when ordering them in increasing core requirement, while the overall coefficient of variation of job service times is {$4.01$}. In this case, the system is stable for {$\lambda<1.72$}. 
    \end{enumerate}
For this MJQM with 5 job classes, in Fig.~\ref{fig:wasted_distro_comp},
we plot the average number of servers wasted due to HOL blocking versus the job arrival rate, with the different distributions of job service times. We can observe that the average number of wasted servers increases with the system load, and with the variance of service times. With hyperexponentially distributed service times, the average number of wasted servers gets close to 5, i.e., to 25\% of the overall MJQM service capacity. In Fig.~\ref{fig:classes_distros}, we plot the probability density function of waiting times per job class with bounded Pareto job service times. The curves show that differences among classes are small and their linear decay indicates the presence of a heavy tail.

Figure~\ref{fig:synth_exp} shows the average, variance, and 90th percentile of job waiting times for each of the 5 classes, plotted against the job arrival rate. The red dashed vertical line reports the stability limit for the MJQM.
We can see that performance metrics are very similar for the 3 classes with lower core requirements (1, 2 and 5 cores), and somewhat higher for jobs of the classes requiring 10 and 15 servers.
The same type of results is plotted in Fig.~\ref{fig:synth_hyperexp} for hyperexponential distributions of job service times. In this case of higher service time variance, the waiting time metrics are at least one order of magnitude higher.
In order to better visualize the impact of the job service time distributions, Figures~\ref{fig:synth_T1} and~\ref{fig:synth_T15} show the average, variance, and 90th percentile of waiting times, as functions of the job arrival rate, for the job classes requiring either 1 or 15 servers. These plots also confirm that performance metrics increase with both system load and variance of job service times.

\subsection{The Google Borg Cell B Scenario}\label{subsec:cellB} 

For a final demonstration of the capabilities of the MJSRE approach, we use a processed version of the data presented in~\cite{borg, baiocchi}. We consider a MJQM with $s=2048$ and 17 job classes, whose characteristics are reported in Table~\ref{tab:cellBreduced}.
These values correspond to those measured on one cell (Cell B) of a Google Borg data center~\cite{borg}.
In Fig.~\ref{fig:cellBreduced}, we plot the curves of the overall average job waiting time versus the job arrival rate.
The red dashed vertical line reports the stability limit for the MJQM.
The plot shows that the average waiting time results obtained with the MJSRE are quite close to the predictions of traditional DES.
\begin{figure}[!htbp]
  \centering
  \begin{subfigure}[t]{0.3\textwidth}
    \includegraphics[width=\linewidth]{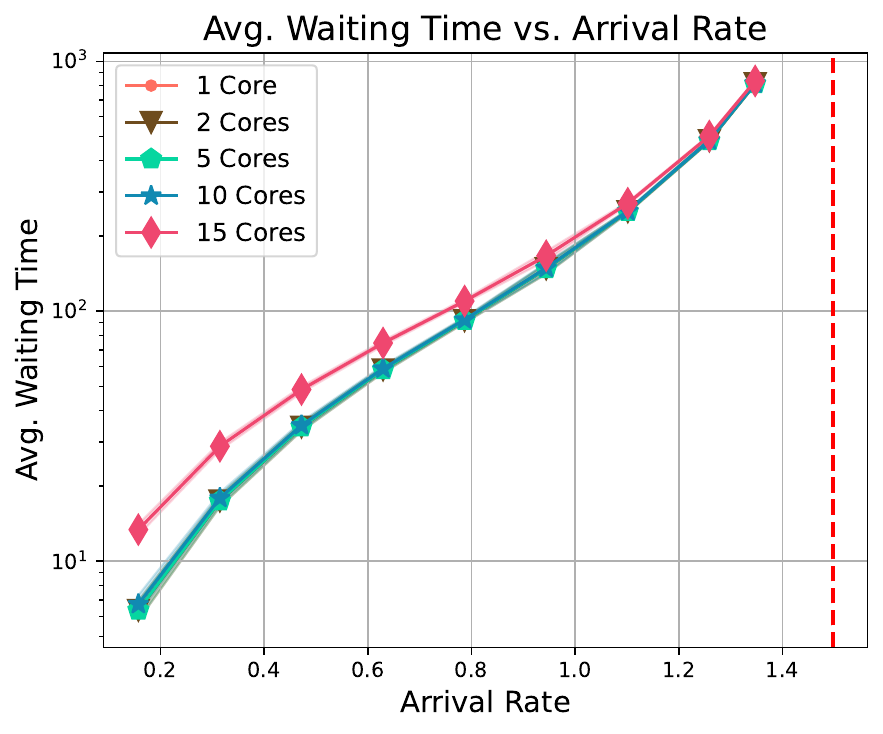}
    \caption{Average waiting time per class with 99\% confidence intervals.}\label{fig:synth_avg_hyperexp}
  \end{subfigure}
  \hfill
  \begin{subfigure}[t]{0.3\textwidth}
    \includegraphics[width=\linewidth]{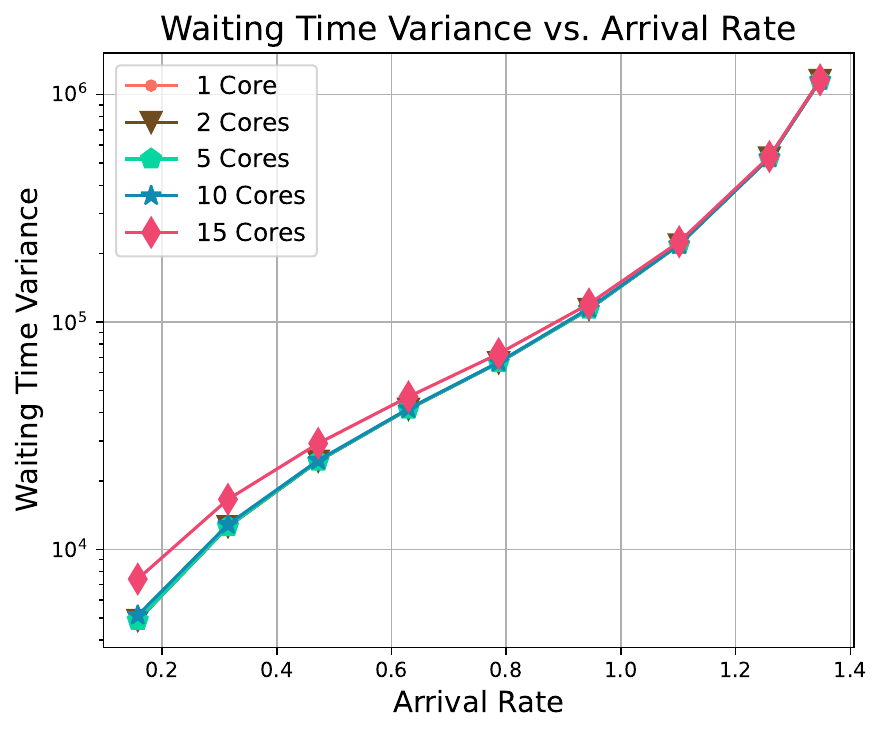}
    \caption{Variance of waiting times per class.}\label{fig:synth_var_hyperexp}
  \end{subfigure}
  \hfill
  \begin{subfigure}[t]{0.3\textwidth}
    \includegraphics[width=\linewidth]{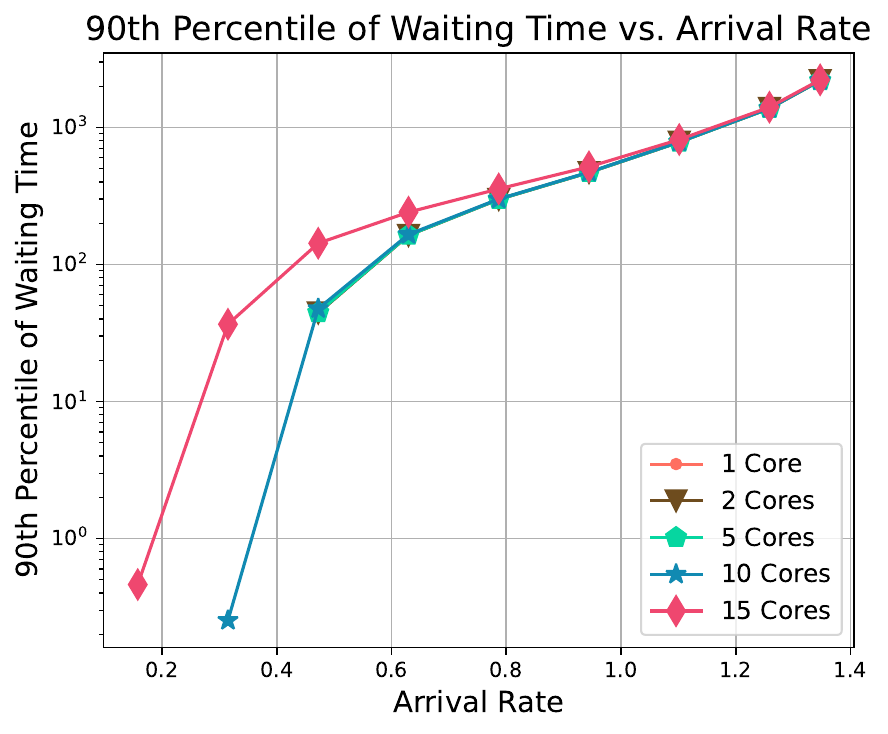}
    \caption{90th percentile of waiting times per class.}\label{fig:synth_90perc_hyperexp}
  \end{subfigure}
  \vspace{-0.2in}
  \caption{Same as Fig.~\ref{fig:synth_exp} with hyperexponential service times.
    \vspace{-0.2in}
}\label{fig:synth_hyperexp}
  \Description{HyperExponential Service Times}
\end{figure}
\begin{figure}[!htbp]
  \centering
  \begin{subfigure}[t]{0.3\textwidth}
    \includegraphics[width=\linewidth]{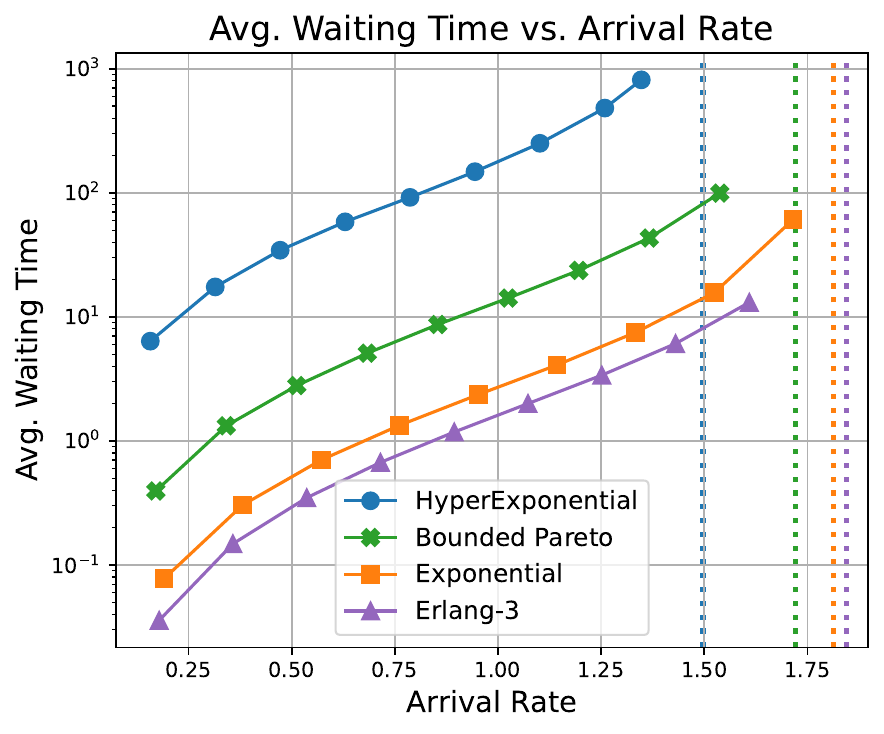}
    \caption{Average Waiting Time vs. Arrival Rate.}\label{fig:synth_avg_T1}
  \end{subfigure}
  \hfill
  \begin{subfigure}[t]{0.3\textwidth}
    \includegraphics[width=\linewidth]{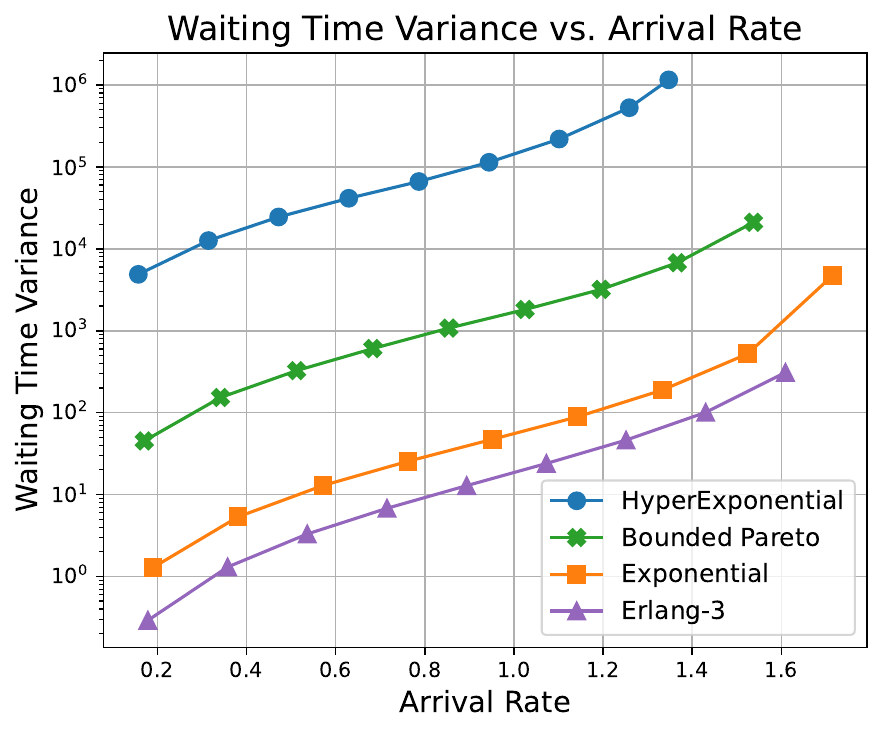}
    \caption{Waiting Times Variance vs. Arrival Rate.}\label{fig:synth_var_T1}
  \end{subfigure}
  \hfill
  \begin{subfigure}[t]{0.3\textwidth}
    \includegraphics[width=\linewidth]{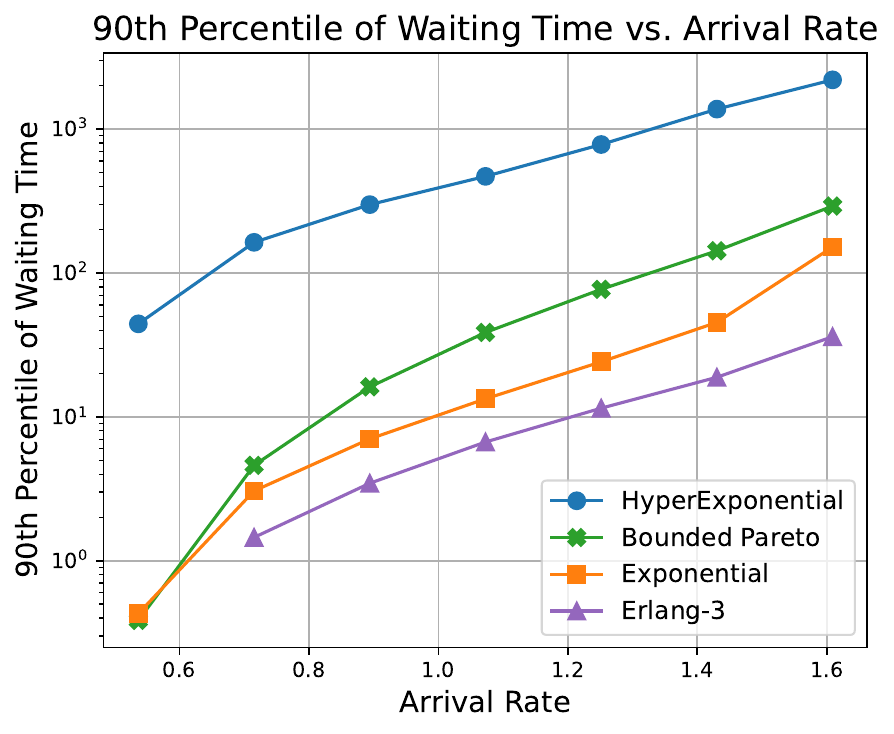}
    \caption{90th percentile of Waiting Times vs. Arrival Rate.}\label{fig:synth_90perc_T1}
  \end{subfigure}
    \vspace{-0.1in}
  \caption{Average, variance and 90-th percentile of waiting times versus job arrival rate for jobs requiring just 1 server in the case of a MJQM with 20 servers and 5 job classes. This is for Poisson job arrivals and job service times with exponential, hyperexponential, Erlang-3 or bounded Pareto distribution, as in Table~\ref{tab:synth}. The vertical bars in the left plot indicate the limit of the stability region with the different service time distributions.}\label{fig:synth_T1}
  \Description{Average, variance and 90-th percentile of waiting times versus job arrival rate for jobs requiring just 1 server in the case of a MJQM with 20 servers and 5 job classes. Poisson job arrivals and job service times with exponential, hyperexponential, Erlang-3 or bounded Pareto distribution, as in Table~\ref{tab:synth}. The vertical bars in the left plot indicate the limit of the stability region with the different service time distributions.}
\end{figure}
\begin{figure}[!htbp]
  \centering
  \begin{subfigure}[t]{0.3\textwidth}
    \includegraphics[width=\linewidth]{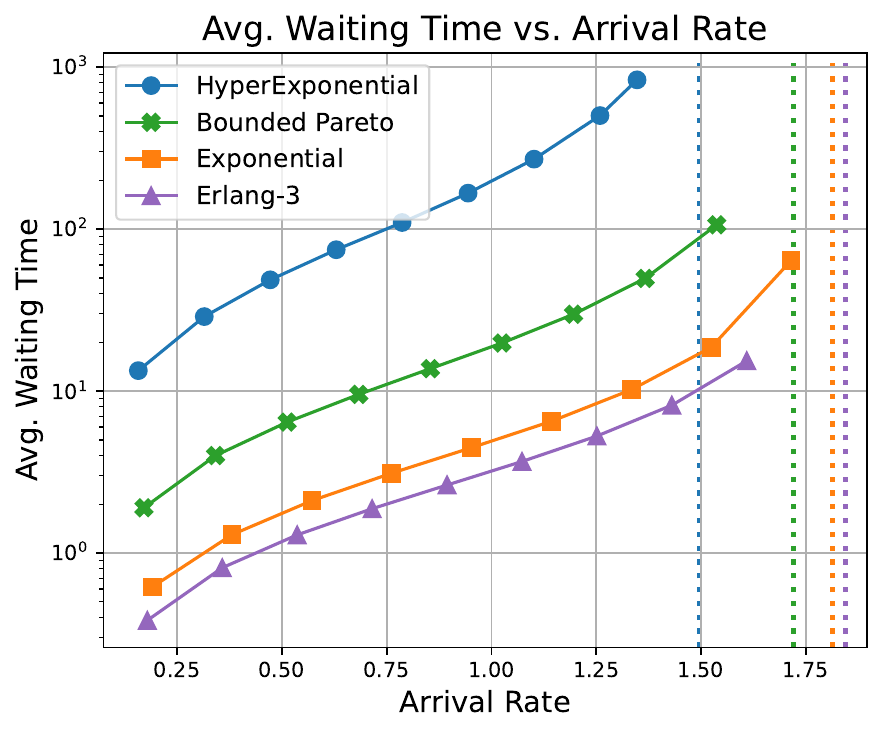}
    \caption{Average Waiting Time vs. Arrival Rate.}\label{fig:synth_avg_T15}
  \end{subfigure}
  \hfill
  \begin{subfigure}[t]{0.3\textwidth}
    \includegraphics[width=\linewidth]{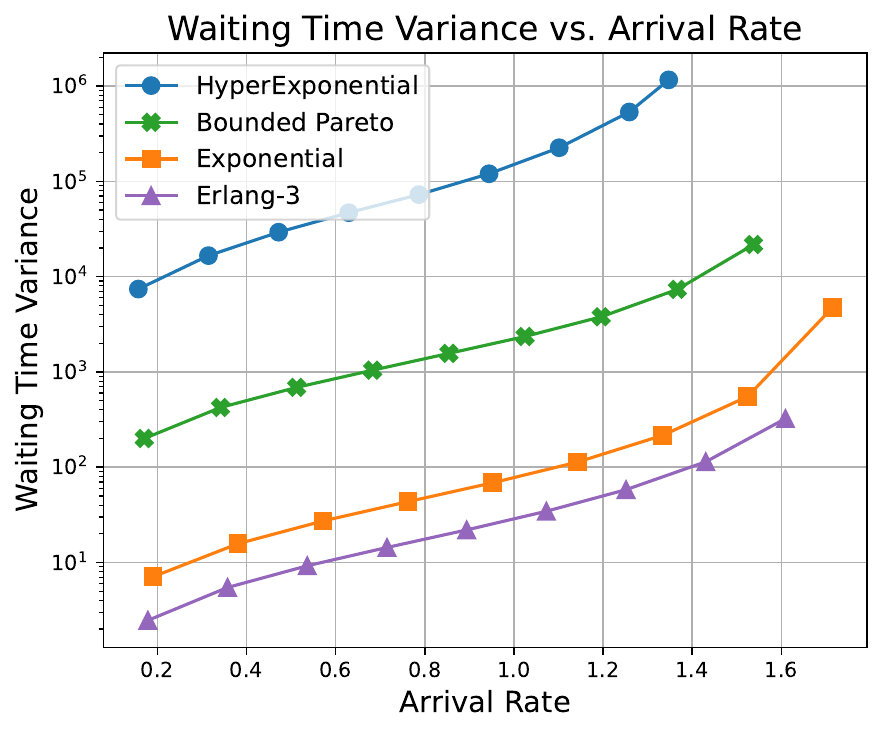}
    \caption{Waiting Times Variance vs. Arrival Rate.}\label{fig:synth_var_T15}
  \end{subfigure}
  \hfill
  \begin{subfigure}[t]{0.3\textwidth}
    \includegraphics[width=\linewidth]{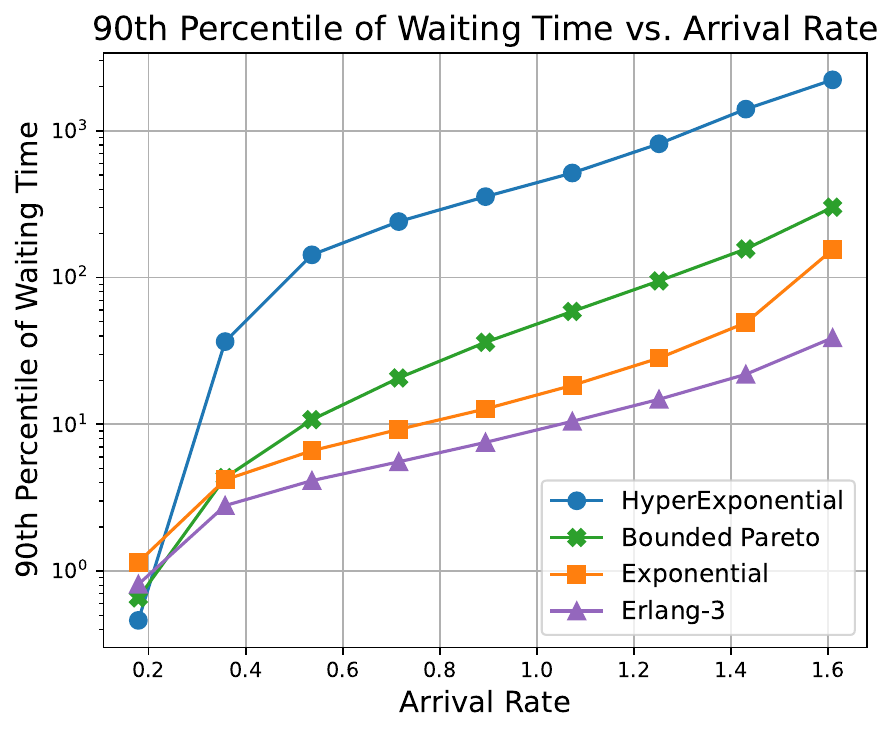}
    \caption{90th percentile of Waiting Times vs. Arrival Rate.}\label{fig:synth_90perc_T15}
  \end{subfigure}
  \vspace{-0.1in}
  \caption{Same as Fig.~\ref{fig:synth_T1} for jobs in the class requiring 15 servers.}
    \vspace{-0.1in}
\label{fig:synth_T15}
  \Description{Class T15}
\end{figure}
\begin{figure}[!htbp]
    \centering
    \begin{subfigure}[t]{0.55\textwidth}
        \centering
        \includegraphics[width=\linewidth]{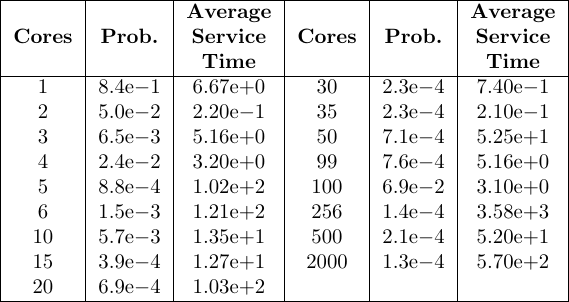}
        \caption{Cell B workload parameters.}
        \label{tab:cellBreduced}
    \end{subfigure}%
    \hfill
    \begin{subfigure}[t]{0.43\textwidth}
        \centering
        \includegraphics[width=0.85\linewidth]{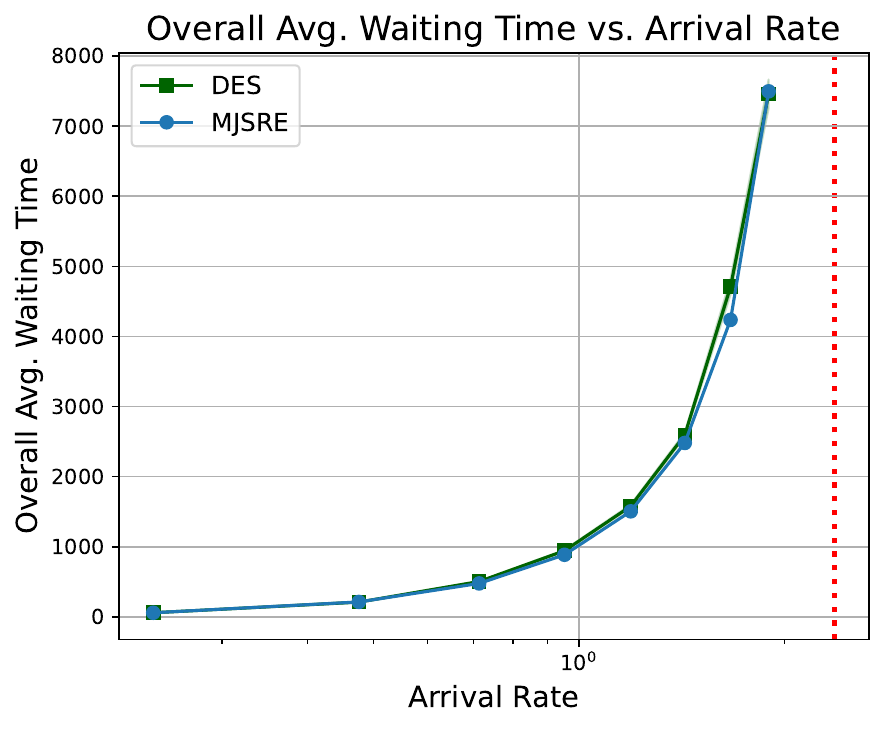}
        \caption{Average waiting time vs. job arrival rate in the case of Cell B.}
        \label{fig:cellBreduced}
    \end{subfigure}
  \vspace{-0.1in}
    \caption{Google Borg Cell B: parameters in the left table and average overall waiting times vs. job arrival rate in the plot on the right.}
    \Description{Google Borg Cell B: parameters in the left table and average waiting times vs. job arrival rate in the plot on the right}
\end{figure}
\section{Generalizations of the Multiserver-Job Recurrence Equation}\label{sec:generalizations}
In this section, we show that the framework proposed for MJQM with FCFS discipline has a wide area of applicability. We propose two extensions of practical interest for which monotonicity holds, hence the discussion proposed so far for the production of SPSs can be immediately extended.
\subsection{Requests for specific resources}
In many practical scenarios, jobs are not oblivious to the resources that they obtain. For example, they may ask for a specific server in a farm because it is equipped with a special GPU or because it has a considerable amount of memory that satisfies the job computational needs. Hereafter, we consider this scenario where servers have identities and jobs may require either an arbitrary server or a specific one. The service discipline is always FCFS. More formally, 
assume now that job $n$ requires $\alpha_n$ non-specified CPUs and in addition $\beta_n$
specific CPUs with $\alpha_n+\beta_n\le s$.
Here, we will assume that the required specific CPUs are picked uniformly
at random within the set $[1,\ldots,s]$, forming a random subset $S_n$ of the latter of cardinality $\beta_n$ of $[1,\dots, s]$.
Then the ordered workload vector satisfies the Specific Multiserver-Job Stochastic Recurrence Equation (SMJSRE):
\begin{equation}\label{eq.3}
        W_{n+1} = {\mathcal R}( V(\alpha_n, S_n, W_n) + \sigma_n  F(\alpha_n, S_n) - \tau_n  \un)^+\,.
\end{equation}
The vector $F(\alpha,S)\in \mathbb R^s$ is defined as follows.
Let  
$
k(S,i)=\sum_{j=1}^i 1_{j\notin S}$  for all  $1\le i\le s$.

Then:
$$
\begin{cases}
F(\alpha,S)^i = & 1,\quad \mbox{for all $i\in S$ or all $i\notin S$ s.t. $k(S,i)\le \alpha$};\\
	& 0,\quad \mbox{otherwise}\,.
\end{cases}
$$
The mapping $V(\alpha,S,W)$ takes as input an 
ordered vector $W\in \mathbb R^s$ and returns the vector of $\mathbb R ^s:$ 
$$
\begin{cases}
	V(\alpha,S,W)^i= & \hat W ,\quad \mbox{for all $i\in S$ or all $i\notin S$ s.t. $k(S,i)\le \alpha$};\\
        & W^i ,\quad \mbox{otherwise}\,,
\end{cases}
$$
with $\hat W= \max(\max_{j\in S} W^j,W^{i^*})$
and $i^*=\inf\{i\ s.t.\ k(i)=\alpha\}$.

\subsection{Multiple resource types}

Real data centers have multiple types of resources (memory, cores, GPUs, etc..) that are available in a finite quantity. Suppose that we have $T$ types of resources. Then, their availability is represented by a vector of positive integers $\overline s=(s^1, \ldots, s^T)$. Requests are described by their inter-arrival times $\tau_n \in \mathbb R^{>0}$, their service times $\sigma_n \in \mathbb R^{>0}$ and their resource demand $\overline \alpha_n = (\alpha_{n}^1, \ldots, \alpha_{n}^T)$, with $0\leq \alpha_n^i \leq s_i$ for all $i=1, \ldots, T$. The workload at the $n$-th job arrival, denoted by $\overline W_n$, is a collection of $T$ ordered vectors of non-negative real entries, i.e.,
$
\overline W_n=(W_{n,1}, \ldots, W_{n,T})\,,
$
where $W_{n,i}$ has $s_i$ components. We say that $\overline W_n \leq \overline W_n'$ if, for all $i=1, \ldots, T$, we have that $W_{n,i}\leq W_{n,i}'$, where the last inequality, as in Section~\ref{sec:mjsre}, must be considered componentwise. Henceforth, we omit the subscript $n$ when we consider a generic workload $\overline W$. 

The SRE for this case is similar to Equation~(\ref{eq.2}), but it now accounts for multiple resource types and will be referred to as the 
Multiresource Multiserver-Job Stochastic Recurrence Equation (MMJSRE):
\begin{equation}\label{eq:msre.mr}
	\overline W_{n+1} = {\mathcal R}( \overline \sync(\overline \alpha_n, \overline W_n) + \sigma_n  \overline \load(\overline \alpha_n) - \tau_n  \un)^+,
\end{equation}
where $\overline \load(\overline \alpha)=(L(\alpha_1), \ldots, L(\alpha_T))$, and $L(\alpha_i)$ is defined in Section~\ref{sec:mjsre} and has $s_i$ components. 
For all ordered $\overline W$, $\overline \sync(\alpha, W)\in (\mathbb{R}^{s_1} \times \cdots \times \mathbb R^{s_T})$ is defined by
$
\overline \sync (\overline \alpha, \overline W)=(\sync_1(\overline \alpha, \overline W), \ldots,
\sync_T(\overline \alpha, \overline W))\,,
$
and:
$$
\sync_j(\overline \alpha, \overline W)^i = 
\begin{cases}
W_1^{\alpha^1}\vee \cdots \vee W_T^{\alpha^T}, & \mbox{for all $1\le i\le \alpha_j$}\\
W_j^i & \mbox{for all $i=\alpha_j+1,\ldots, s_j$\;.}
\end{cases}
$$
$U$ is defined coherently with the dimensions of $\overline W$ and the ascending reordering operator $\mathcal R$ is also applied to each vector $W_{n,i}$ individually.

The first essential property that we need to develop to apply the theory proposed in this paper is the analogue of Lemma~\ref{lemma:monotonic} for the SRE~(\ref{eq:msre.mr}). Now, suppose we have two workload vectors $\overline W \leq \overline W'$. Equation~(\ref{eq:msre.mr}) modifies the first $\alpha_i$ components of entries $W_i$ and $W_i'$. By hypothesis $W_1^{\alpha^1}\vee \cdots \vee W_T^{\alpha^T} \leq W_1^{'\alpha^1}\vee \cdots \vee W_T^{'\alpha^T}$. Then, the proof follows as for Lemma~\ref{lemma:monotonic} since the reordering is applied independently to each component of $\overline W$ and $\overline W'$.   

\subsection{Special cases and relations with other stochastic recurrence equations}
The MMJSRE contains the MJSRE as a special case, which in turn contains the G/G/s queue SRE as a special case.
The MJSRE also contains the Task Graph recurrence equations of~\cite{QMIPS-book, QMIPS-article} which is an instance
of stochastic (max,plus) linear recurrence equation~\cite{BCOQ}.

\section{Conclusion}\label{sec:conclusion}
The performance analysis of MJQM has drawn significant attention from the research community in recent years. However, existing analytical and numerical results are limited to cases where the number of classes is unrealistically low (most often just $2$) and where Markovian assumptions are used. 
While these results provide insights into the issue of low maximum achievable throughput in these systems, they do not capture the realistic data center dynamics where numerous classes of jobs are active and the processing or inter-arrival times are not exponential. 

DES is also challenging because some job classes may have extremely low arrival probability (thus requiring very long simulations). Even worse, the stability condition is not known for more than two classes even under the exponential assumption on the service times and independent Poisson arrival process. As a result, before running a simulation experiment, it is not possible to know whether the system is stable under a given traffic intensity, and hence can reach steady state. 

In this work, we have proposed a novel approach to MJQM analysis with FCFS scheduling that allowed us to provide solutions to the problems mentioned above. We have first reduced this class of dynamics to stochastic recurrence equations. We have then proved two key properties of the stochastic recurrence equation underlying these systems: a) the monotonicity of the ordered workload at the servers and b) the fact that the system satisfies the properties of the monotone-separable framework. As a consequence of property a) we have developed a massively parallelizable algorithm to draw SPSs of the system's workload. In turns, this allowed us to compute important performance metrics of the MJQM, such as the distribution of the waiting times per class, the average number of wasted servers, the expected number of jobs in the system in steady state. 
The fact that this system is monotone-separable has further important implications, in particular the fact that the stability analysis can be carried out by decoupling the arrival process from the service process. This allowed us to obtain the first (to the best of our knowledge) characterization of the conditions under which the waiting time in MJQMs has a finite stationary regime, as well as a finite second moment, thus allowing confidence intervals to be built around its mean. Second, we have given the first algorithm for the numerical computation of the stability condition of FCFS MJQMs with arbitrary number of classes, arbitrary service time distribution (with finite mean) and arbitrary inter-arrival time distribution. Quite interestingly, the stability condition depends on service time distributions but is insensitive to the inter-arrival time distribution {as long as the probabilities of job classes are independent of the MJQM state}. Intuitively, this happens because service time distributions 
influence how jobs release resources, and hence a higher variability 
determines a lower maximum utilization, while inter-arrival time distributions do not affect maximum utilization. 

This paper opens several lines of research. Although we have shown in this work that this framework can be extended to systems with multiple resource types or with requests to specific resources, an open problem remains: \textit{which scheduling disciplines beyond FCFS satisfy the monotonicity property?} From the point of view of algorithm implementation, a relevant future direction of research concerns further parallelization of the MJSRE algorithms. Another promising line of research concerns the replacement of SPS
by perfect samples along the lines of what is available for the GI/GI/c queue~\cite{blanchet}. The question whether there exist
computable stationary upper-bounds that have the same stability condition as the initial MJQM is open.
It is also particularly appealing to try to exploit the potential parallelism in the number of servers $s$ when $s$ is large. The algebraic structure identified in this paper opens a new path in this direction along the lines of what was done, for example, in~\cite{Canales}.

\section{Acknowledgments}
This work was supported in part by Vinnova Center for Trustworthy Edge Computing Systems and Applications (TECoSA), in part by ERC NEMO under the European Union’s Horizon 2020 research and innovation program (grant agreement number 788851 to INRIA), and in part by Project TUCAN6-CM under Grant TEC-2024/COM-460 funded by the Community of Madrid (ORDER 5696/2024). M. Ajmone Marsan and F. Baccelli thank LINCS where this research was initiated.


\bibliographystyle{ACM-Reference-Format}
\bibliography{biblio}

\appendix

\section{On the application of the sandwich algorithm for perfect sampling}\label{app:sandwich}

As explained in Section~\ref{subsec:perfalg}, the algorithm for the production of samples of the stationary workload vector only provides SPS. Remember that a SPS is a lower bound $L_n$ to the perfect sample, indexed by a parameter $n$ linked to simulation time, and such that the sequence $L_n$ increases and couples in finite time a.s. to the perfect sample. A natural question in this context is whether there exists a stopping rule that would allow one to detect the smallest $n$ such that there is certitude that the current lower bound is a perfect sample, so that one can stop simulating. This problem is solved in~\cite{blanchet0} for the coupling from the past of the stable GI/GI/1 queue and in~\cite{blanchet} for that of the GI/GI/c queue. In the GI/GI/c case, the technique leverages the upper-bounding of the workload vector by a computable stationary majorant. When the backward simulation starting from this stationary majorant and from the empty system meet at time 0, the simulation can be stopped and shown to deliver a perfect sample. The stationary upper bound is a random assignment queuing system where the $c$ servers are loaded at random rather than adaptively (namely, less loaded first). A key point here is that the stability condition
of the GI/GI/c queue is the same as that of the associated random assignment queuing system. A natural questions is whether the very same technique could be extended to the multiserver job queuing dynamics. Unfortunately, the answer is negative in general. The reason is that the stability condition of the random assignment multiserver job queuing system is in general different (and more restrictive) than that of the genuine multiserver job queuing system as shown below. An interesting (and open) question is whether one can devise another computable stationary majorant for the multiserver job queuing workload vector that would then lead to another stopping rule. This important theoretical
question is not addressed in the present paper. In practice, for all systems simulated so far and with service times having exponential moments, the exponential backoff procedure proposed in the algorithm provides answers that very closely match the results obtained either by analytical methods or by direct discrete event simulation. In other terms, at least for this exponential moment service time case, the use of SPS and exponential backoff allows one to safely predict the performance of multiserver job queues of practical interest.

Now, we give a counterexample aimed at showing that the approach adopted in~\cite{blanchet} cannot be used here.
Consider a system where jobs are i.i.d. and require either one or all servers (jobs are called small and big in the two cases).
Let $p$ denote the probability that a job is big. Assume that all service times are deterministic of size 1. To show that the stability for the RA case is strictly more severe than that for the Greedy case, we evaluate the   pile growth rate in both cases. For this, we use our identification of the stability condition in terms of the growth rate of piles. In both cases, we can decompose the   pile into cycles.
A cycle starts with a big job and ends with the next big job. Given that the next big job arrives as the $k+1$-st job, the sub-pile of 
small jobs has size $G_k= \lfloor k/c \rfloor$ in the greedy case, and $R_k =max(m(1,k),\ldots,m(c,k))$ in the RA case, with $(m(1,k),\ldots,m(c,k))$ a multinomial random vector of parameters $k$, $1/c$. Clearly $R_k \ge G_k$ and, with a positive probability, $R_k>G_k$. Hence $E(R_k)>G_k$. Let $p_k= (1-p)^k p$, $k\ge 0$ be the (geometric) probability that there are $k$ small jobs in the cycle.
Let $MG= \sum_k p_k G_k$ and $MR= \sum_k p_k E(R_k)$. 
Then, the ratio of the growth rates of the overall piles in the RA and Greedy cases is $(1+MR)/(1+MG) >1$, which concludes the proof.

Numerical examples of the differences in the stability condition boundaries obtained with FCFS and RA in some of the cases studied in Section~\ref{sec:numerical} are presented in Table~\ref{tab:fcfsvsra}.

\begin{table}[t]
\caption{Comparison of FCFS and RA stability region boundaries, under different scenarios. The 5-classes scenario is further described in Subsection~\ref{subsec:synth}, whereas cellB setting is defined in detail in Subsection~\ref{subsec:cellB}. \label{tab:fcfsvsra}}
{\footnotesize
\vspace{-0.1in}
\label{tab:sr-comp}
\centering
\begin{tabular}{|l|c|c|c|c|c|c|}
\hline
\textbf{Scenario} & \textbf{FCFS} & \textbf{RA}\\
\hline
\hline
\textbf{$\pmb{5-}$Classes, Exponential} & 1.80 & 0.95\\
\textbf{$\pmb{5-}$Classes, Bounded Pareto} & 1.72 & 0.92\\
\textbf{Google Borg CellB, Exponential} & 2.38 & 0.85\\
\hline
\hline
\end{tabular}}
\end{table}

\end{document}